\newcommand{\tr}{\mathop{\mathrm{tr}}\limits}
\newcommand{\traccazero}{\mathop{\mathrm{tr}_{\mathcal{H}_0}}\limits}
\newcommand{\eket}[2]{|e^#1 #2 \rangle}
\newcommand{\ebra}[2]{\langle e^#1 #2|}
\newcommand{\hc}{\mathop{\mathrm{H.c.}}\limits}
\newtheorem{theorem}{Theorem} \newtheorem{lemma}{lemma}
\newenvironment{proof}[1][Proof]{\noindent\textit{#1.} }{\
  \rule{0.5em}{0.5em}}
\global\long\def\bra#1{\langle{#1}|}
\global\long\def\ket#1{|{#1}\rangle }
\global\long\def\braket#1#2{\langle#1|#2\rangle}
\global\long\def\ketbra#1#2{\ket{#1}\bra{#2}}
\global\long\def\al{\alpha} \global\long\def\be{\beta}
\global\long\def\ga{\gamma}
\global\long\def\la{\lambda} \global\long\def\ka{\kappa}
 \global\long\def\Om{\Omega}
\global\long\def\eps{\epsilon}
\begin{document}

\title{Targeting pure quantum states by strong noncommutative
  dissipation}

\author{Vladislav Popkov} \thanks{E-mail: popkov@uni-bonn.de}
\affiliation{ HISKP, University of Bonn, Nussallee 14-16, 53115 Bonn,
  Germany.}  \affiliation{ Centro Interdipartimentale per lo studio di
  Dinamiche Complesse, Universit\`a di Firenze, via G.  Sansone 1,
  50019 Sesto Fiorentino, Italy } \author{Carlo Presilla}
\affiliation{Dipartimento di Fisica, Sapienza Universit\`a di Roma,
  Piazzale Aldo Moro 2, Roma 00185, Italy} \affiliation{Istituto
  Nazionale di Fisica Nucleare, Sezione di Roma 1, Roma 00185, Italy}
\author{Johannes Schmidt} \affiliation{Institut f\"{u}r Teoretische
  Physik, Universit\"{a}t zu K\"{o}ln, Z\"ulpicher Strasse 77,
  K\"{o}ln, Germany.}

\begin{abstract}
  We propose a solution to the problem of realizing a predefined and
  arbitrary pure quantum state, based on the simultaneous presence of
  coherent and dissipative dynamics, noncommuting on the target state
  and in the limit of strong dissipation.  More precisely, we obtain a
  necessary and sufficient criterion whereby the nonequilibrium steady
  state (NESS) of an open quantum system described by a Lindblad
  master equation approaches a target pure state in the Zeno regime,
  i.e., for infinitely large dissipative coupling.  We also provide an
  explicit formula for the characteristic dissipative strength beyond
  which the purity of the NESS becomes effective, thus paving the way
  to an experimental implementation of our criterion. For an
  illustration, we deal with targeting a Bell state, an arbitrary pure
  state of $N$ qubits, and a spin-helix state of $N$ qubits.
\end{abstract}

\date{\today }
\maketitle

\section{Introduction} Preparation of entangled states is a core
problem in most protocols of quantum information science and its
technological applications.  Target quantum states can be generated,
in principle, in various ways: via coherent dynamics, dissipative
dynamics, or a generic combination of both coherent and dissipative
dynamics~\cite{ShankarNature2013Bell,LinNature2013Bell,
  BlattPhysReports08,BlattNature2013,Kienzler2015Science,
  Kastoryano2011PRL,TicozziViola2012,PhysRevA.87.033802,
  PhysRevLett.107.080503,Diehl2011,PhysRevLett.106.020504,
  PhysRevLett.110.120402,PhysRevLett.110.253601,PhysRevLett.117.040501,
  1367-2630-14-6-063014,1367-2630-14-5-053022,
  PhysRevLett.111.246802,PhysRevA.88.023849,Barreiro2011}.

The combined dissipative generation of pure states, which best mimics
the typical experimental conditions, has the advantage of being stable
against decoherence and almost independent of the initial conditions.
Both these features are due to the existence of a nonequilibrium
steady state (NESS), assumed to be unique, of the Lindblad master
equation containing a non-unitary part that models a dissipative
coupling of a quantum system to the environment.  Apart from
limitations due to imperfections of experimental setups, an
\textit{ideal target} might be unreachable because of fundamental
limitations. An optimization problem then arises, namely, how to
approach the target maximally closely with the available resources.
The task can be framed in a more general context of problems of
optimal control of open quantum systems~\cite{Dirr2009} with vast
technological applications~\cite{Glaser2015}.

In the present context, the ideal target is to generate a chosen pure
NESS dissipatively, i.e., to obtain a pure state solution,
$\rho=\ket{\Psi}\bra{\Psi}$, of a stationary master equation of
Lindblad form~\cite{Lindblad,GKS}
\begin{align*}
  -\frac{i}{\hbar} \left[H,\rho\right] + \sum_{\alpha} \Gamma_{\alpha}
  \left( L_{\alpha} \rho L_{\alpha}^\dag - \frac{1}{2} \left(
      L_{\alpha}^\dag L_{\alpha} \rho + \rho L_{\alpha}^\dag
      L_{\alpha} \right) \right)=0.
\end{align*}
To accomplish this task the available resources are: the dissipative
actions (the Lindblad operators $L_{\alpha}$), the accessible ranges
of the respective dissipative strengths (the parameters
$\Gamma_{\alpha}$), and the accessible coherent evolutions (the
effective Hamiltonian $H$).  It is known~\cite{Yamamoto05,ZollerPRA08}
that an exact pure NESS of the above master equation requires a
\textit{commuting} action of the coherent and dissipative parts of the
dynamics on the target state.  More precisely, the target state $|
\Psi \rangle$ must satisfy
\begin{align}
  &H | \Psi \rangle = \lambda | \Psi \rangle, \label{Yamamoto1}\\
  &L_\alpha | \Psi \rangle =0, \qquad \mbox{ for all $\alpha$}.
  \label{Yamamoto2}
\end{align}
Often, Eqs.~(\ref{Yamamoto1}) and (\ref{Yamamoto2}) cannot be
simultaneously satisfied, making the ideal target unreachable.

Suppose, however, that one can still satisfy, or approximately
satisfy, some of Eqs.~(\ref{Yamamoto1}) and (\ref{Yamamoto2}). Which
conditions are crucial and which can be relaxed without ruining the
NESS? A universal answer to this question does not exist, as it would
require a perturbative analysis of a concrete Liouvillean operator.

Our aim is to demonstrate that one can generate an almost pure NESS by
a weaker criterion than the ``commutativity on the state'', at the
cost of increasing the strength of the dissipation. This strength can
somewhat be manipulated in experiments, e.g., a dissipative loss rate
of atoms in an optical trap is controlled by a laser beam intensity.
Even the Zeno limit, namely, the limit of infinitely large dissipative
couplings $\Gamma_\alpha$, can be reached
experimentally~\cite{ZenoStaticsExperimental, TonksgasScience,
  ZenoDynamicsExperimental, Signoles2014}, and it often produces
surprising
effects~\cite{Zeno,TonksgasNJP,TonksgasPRA2009,Zanardi2014PRL,Zanardi2015PRA}.
Another important aspect of our criterion is that, except for its
strength, the dissipation is regarded as fixed, whereas it is $H$ to
be considered adjustable, i.e., $H$ plays the role of the control
parameter providing the desidered nature of the target state
$\ket{\Psi}$.  Again, this point of view appears to be experimentally
approachable~\cite{2016OtteXXZ}.

According to our weaker criterion, coherent and dissipative actions on
a target state need not be commutative.  Violation of the
``commutativity on the state'' forbids, in a strict mathematical
sense, to generate an exact pure NESS. Instead, a state, which is
arbitrarily close to the exact one, up to a controlled error, can be
dissipatively generated.  Our criterion thus naturally consists of two
parts, given by following theorems~\ref{theorem1} and
\ref{theorem2}. Theorem~\ref{theorem1} introduces the weaker criterion
valid for \textit{strong} dissipation. The criterion essentially
consists of a condition which relates the effective Hamiltonian $H$,
the pure NESS $\ket{\Psi}$, and the dark state of the dissipator,
namely, the eigenvector corresponding to zero eigenvalue of the
Lindbladian in the limit of infinitely large dissipation.  In
formulating this criterion, we suppose that the dissipation acts only
on a subset of the degrees of freedom of the system, typically a small
subset, as in the case of boundary driven systems.  Theorem
\ref{theorem2} quantifies how strong the dissipation should be in
order to generate an almost pure NESS, and imposes further
(exceptional) restrictions on the Hamiltonian.  These restrictions
concern isolated points where the weak criterion alone fails or might
fail.  Thus, theorems \ref{theorem1} and \ref{theorem2} provide,
respectively, necessary and sufficient conditions for generating a
pure NESS in the Zeno limit.

To illustrate theorems \ref{theorem1} and \ref{theorem2}, we discuss
three examples: the generation of a Bell state of two qubits in a
system of three interacting qubits (see
Fig.~\ref{BellState_Targeting}), its $N$-qubit generalization, namely,
the generation of an arbitrary pure state of $N$ qubits in a system of
$N+1$ interacting qubits, and, finally the generation of a factorized
spin-helix state in a boundary driven spin chain with nearest-neighbor
interaction (see Fig.~\ref{SpinHelixState_Targeting}).  Note that, as
it will be clear from the precise statement of our criterion, in all
these example we consider an enlarged system to obtain the effectively
desired pure NESS, e.g., $N+1$ qubits for a state of $N$ qubits.  In
the first example, we illustrate the presence of exceptional points
where the criterion may fail.  In the second one, we explicitly show
how a NESS approaches a pure state in the limit of large dissipative
strength and evaluate the NESS relaxation time. With the third example
we demonstrate how, for a given set of resources, the ``commutativity
on the state'' condition is never fulfilled, while our weaker
condition can easily be met by tuning a model parameter. In all
examples, the most common type of qubit dissipation is considered,
namely, the polarization, with an adjustable degree of one or two
spins.
\begin{figure}[t]
  \begin{center}
    \includegraphics[width=0.61\columnwidth,clip]{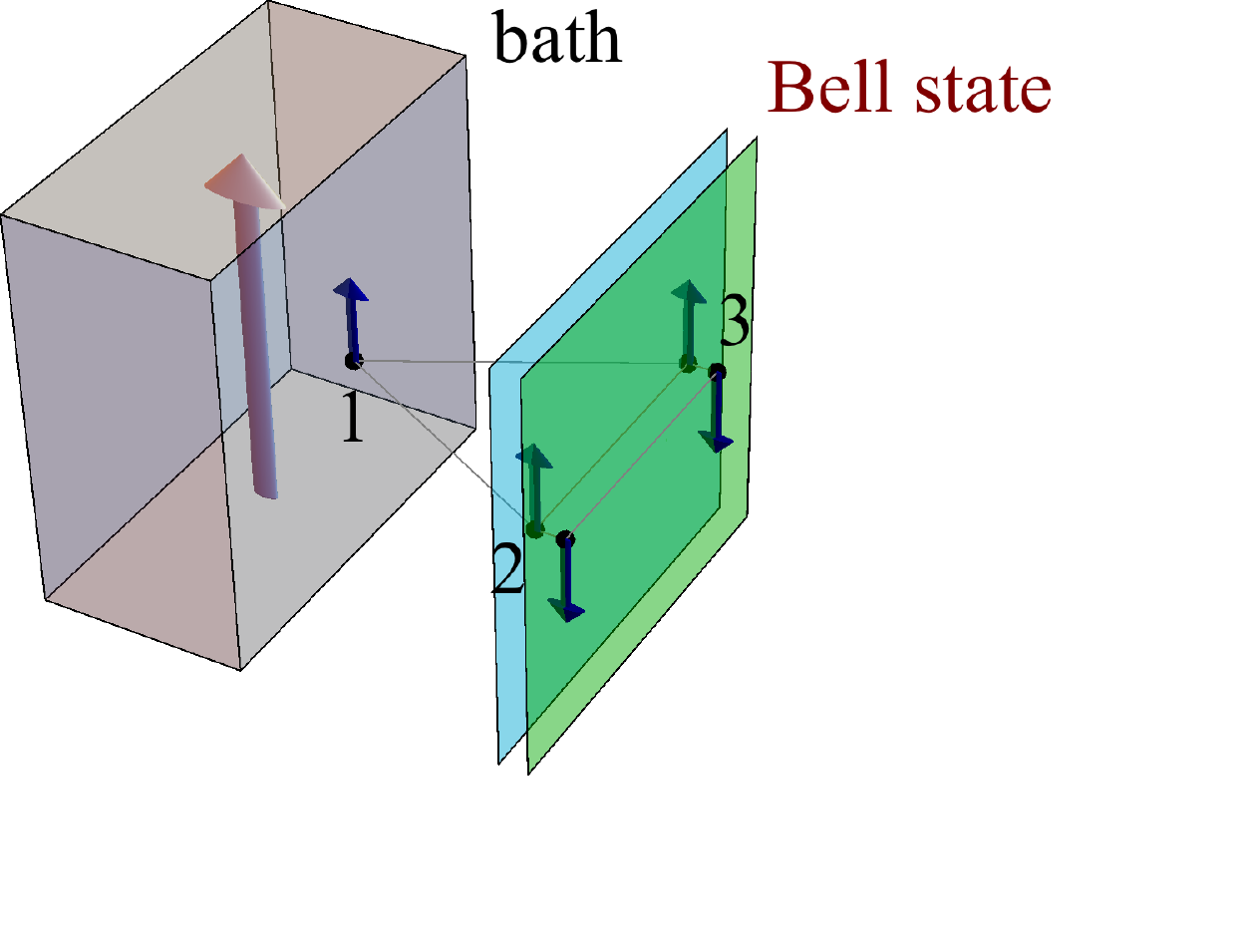}
    \caption{ Targeting a Bell state $\textstyle{\frac{1}{\sqrt{2}}}
      \left( \ket{\uparrow_2 \uparrow_3} + \ket{\downarrow_2
          \downarrow_3}\right)$ in a system of $N=3$ qubits by locally
      coupling qubit 1 to a fully polarizing bath.  Targeting
      criterion (\ref{ConditionTargetPure}) is satisfied by the
      Hamiltonian $H$ defined in
      Fig.~\ref{Fidelity&Gammach_vs_lambda}.}
    \label{BellState_Targeting}
  \end{center}
\end{figure}
\begin{figure}[t]
  \begin{center}
    \includegraphics[width=0.99\columnwidth]{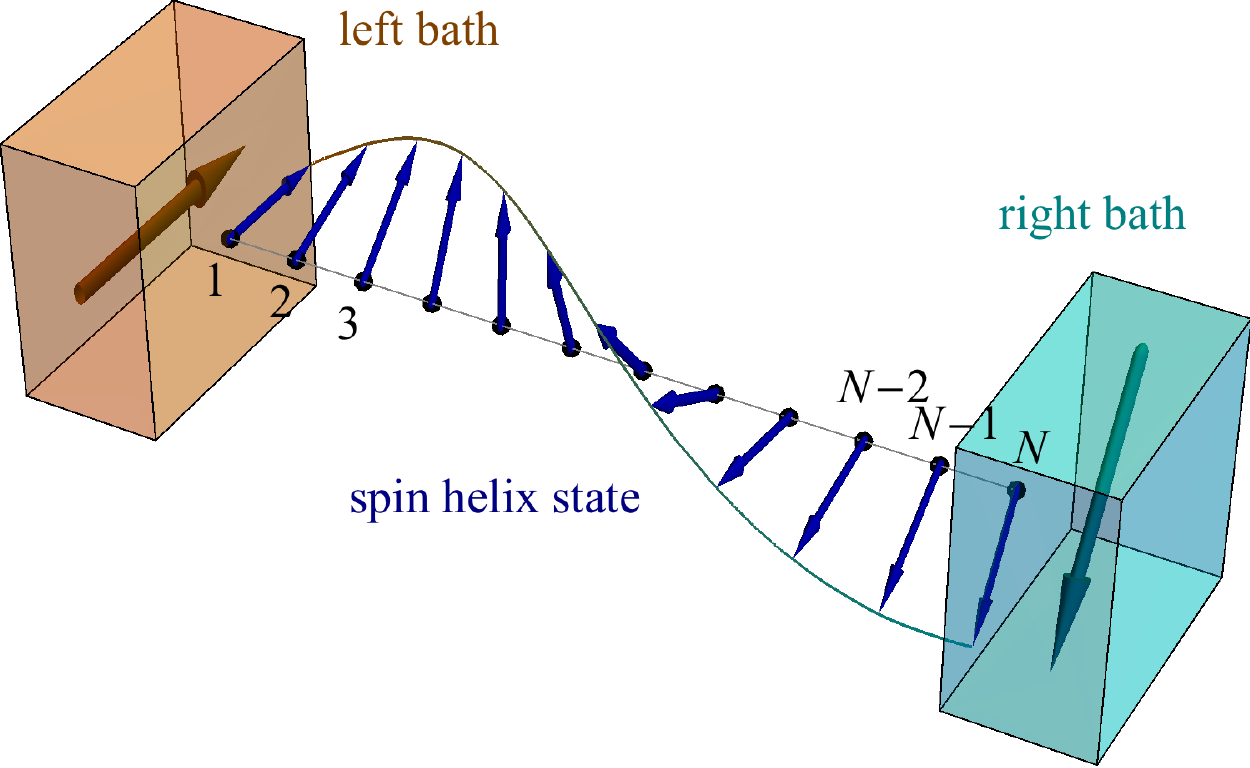}
  \end{center}
  \caption{ Targeting a spin-helix state in a chain of $N$ spins by
    locally coupling boundary spins $1$ and $N$ to two fully
    polarizing baths.  Targeting criterion (\ref{ConditionTargetPure})
    is satisfied by assuming $H$ as the Hamiltonian of the $XXZ$
    Heisenberg spin $1/2$ model.}%
  \label{SpinHelixState_Targeting}
\end{figure}

\section{Main results}
To formalize the problem, consider a finite quantum system in contact
with an external environment.  The time evolution of the reduced
density matrix of the system $\rho$ is described by a quantum master
equation in the Lindblad form
\cite{Petruccione,PlenioJumps,ClarkPriorMPA2010} (we set $\hbar=1$),
\begin{align}
  \frac{\partial\rho}{\partial t}=\mathcal{L}\rho= -i\left[
    H,\rho\right] +\Gamma \mathcal{D}_L \rho,
  \label{LME}
\end{align}
where $H$ represents the unitary part of the evolution of the reduced
density matrix, and $\mathcal{D}_L$ is the Lindblad dissipator.  For
simplicity, we suppose that the dissipator contains a single Lindblad
operator,
\begin{align}
  \mathcal{D}_L\rho = L \rho L^\dag- \frac{1}{2} \left( L^\dag L \rho
    + \rho L^\dag L \right),
  \label{DefDissipator}
\end{align}
with the jump operator $L$ acting in a subspace $\mathcal{H}_0$ of the
whole Hilbert space $\mathcal{H}$ of the system.  We also suppose,
this in an important point, that the kernel of $\mathcal{D}_L$ is
nondegenerate and pure in $\mathcal{H}_0$, i.e., it exists a unique
pure state of $\mathcal{H}_0$, indicated by $\ket{\psi_{\rm{Zeno}}}$,
such that $\mathcal{D}_L
\ket{\psi_{\rm{Zeno}}}\bra{\psi_{\rm{Zeno}}}=0$.  Note that we have $
\mathcal{H}= \mathcal{H}_0 \otimes \mathcal{H}_1$, with
$\mathcal{H}_0$, $\mathcal{H}_1$ having finite dimensions $d_0$ and
$d_1$, respectively. The state $\ket{\psi_{\rm{Zeno}}}$ should not be
confused with the desired target state, which we call
$\ket{\psi_{\rm{target}}}$ and which resides in $\mathcal{H}_1$.  With
an abuse of notation, which, however, should not generate troubles,
the product state $\ket{\Psi } = \ket{\psi_{\rm{Zeno}}} \otimes
\ket{\psi_{\rm{target}}}$ will also be designated as the target state
in the space $\mathcal{H}$.  As it will be clear from the statement of
the theorems given below, we use a part of the degrees of freedom of
the system, namely, $\mathcal{H}_0$, as an \textit{auxiliary} subspace
to achieve, in the Zeno limit, the desired target state in
$\mathcal{H}_1$. This constitutes our minimal setup.  Note that the
reduced density matrix $\rho$ which appears in Eqs.~(\ref{LME}) and
(\ref{DefDissipator}) belongs to the full Hilbert space $\mathcal{H}$.
Finally, we assume that the NESS, i.e., the time-independent solution
of Eq.~(\ref{LME}), that we denote as $\rho_{\rm{NESS}}(\Gamma)$, is
unique for any $\Gamma$.  The necessary and sufficient criterion for
the generation of a pure NESS $\ket{\Psi}$ in the Zeno limit are given
by the following theorems \ref{theorem1} and \ref{theorem2}.
\begin{theorem}[necessary condition]
  \label{theorem1}
  If $\lim_{\Gamma\to\infty} \rho_{\rm{NESS}}(\Gamma) =
  \ket{\Psi}\bra{\Psi}$, with $\ket{\Psi } = \ket{\psi_{\rm{Zeno}}}
  \otimes \ket{\psi_{\rm{target}}}$ and
  $\ket{\psi_{\rm{target}}}\in\mathcal{H}_1$, then the Hamiltonian $H$
  satisfies
  \begin{align}
    H \ket{\Psi} = \lambda \ket{\Psi} + \kappa
    \ket{\psi_{\rm{Zeno}}^\perp} \otimes \ket{\psi_{\rm{target}}},
    \label{ConditionTargetPure}
  \end{align}
  where $\braket{\psi_{\rm{Zeno}}}{\psi_{\rm{Zeno}}^\perp}=0$.
\end{theorem}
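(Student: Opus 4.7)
The plan is to Taylor-expand $\rho_{\rm{NESS}}(\Gamma)=\rho_0+\rho_1/\Gamma+O(1/\Gamma^2)$ and impose $\mathcal{L}\rho=0$ order by order in $\Gamma$. At order $\Gamma^1$ one has $\mathcal{D}_L\rho_0=0$, and the assumed uniqueness and purity of the dissipator's kernel in $\mathcal{H}_0$ gives $\ker\mathcal{D}_L=\ket{\psi_{\rm{Zeno}}}\bra{\psi_{\rm{Zeno}}}\otimes\mathcal{B}(\mathcal{H}_1)$; combined with the theorem's hypothesis this fixes $\rho_0=\ket{\psi_{\rm{Zeno}}}\bra{\psi_{\rm{Zeno}}}\otimes P_t$, with $P_t:=\ket{\psi_{\rm{target}}}\bra{\psi_{\rm{target}}}$.

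At order $\Gamma^0$ the equation $\mathcal{D}_L\rho_1=i[H,\rho_0]$ is solvable iff its right-hand side lies in $\mathrm{Im}(\mathcal{D}_L)$. Trace preservation of $\mathcal{D}_L$ combined with the kernel nondegeneracy gives $\ker(\mathcal{D}_L^\dag)=I_{\mathcal{H}_0}\otimes\mathcal{B}(\mathcal{H}_1)$, so the Fredholm alternative reduces the solvability to $\text{tr}_{\mathcal{H}_0}[H,\rho_0]=0$ viewed as an operator equation on $\mathcal{H}_1$. Choosing an orthonormal basis $\{\ket{z_i}\}$ of $\mathcal{H}_0$ with $\ket{z_0}=\ket{\psi_{\rm{Zeno}}}$ and writing $H=\sum_{ij}\ket{z_i}\bra{z_j}\otimes H_{ij}$ with $H_{ij}\in\mathcal{B}(\mathcal{H}_1)$, a short calculation reduces this to $[H_{00},P_t]=0$, where $H_{00}:=\bra{\psi_{\rm{Zeno}}}H\ket{\psi_{\rm{Zeno}}}$. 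Since $P_t$ is a rank-one projector, its commutator with $H_{00}$ vanishes iff $\ket{\psi_{\rm{target}}}$ is an eigenvector of $H_{00}$ with a real eigenvalue $\lambda$; this supplies the $\lambda\ket{\Psi}$ piece of~\eqref{ConditionTargetPure}.

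For the product structure of the off-Zeno piece I would proceed to order $\Gamma^{-1}$. The general solution of the previous order is $\rho_1=\rho_1^{(p)}+\ket{\psi_{\rm{Zeno}}}\bra{\psi_{\rm{Zeno}}}\otimes M_1$, where $\rho_1^{(p)}$ is an explicit particular solution that depends linearly on the off-Zeno vectors $\ket{\phi_i}:=(\bra{z_i}\otimes I)H\ket{\Psi}$ for $i\neq 0$, and $M_1$ is the free piece in $\ker\mathcal{D}_L$. The $\Gamma^{-1}$ solvability condition $\text{tr}_{\mathcal{H}_0}[H,\rho_1]=0$ then becomes the stationarity equation for $P_t$ under the effective sub-leading Lindblad-type dynamics induced on the Zeno manifold, whose effective jump operators---up to normalizations coming from the pseudo-inverse of $\mathcal{D}_L$---are the matrix elements $H_{i0}:=\bra{z_i}H\ket{\psi_{\rm{Zeno}}}$ acting on $\mathcal{H}_1$. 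The standard dark-state requirement for $P_t$ to be a fixed point of this effective dissipator then forces $H_{i0}\ket{\psi_{\rm{target}}}=\kappa_i\ket{\psi_{\rm{target}}}$ for every $i\neq 0$; combined with the first-order eigenvalue relation this gives $H\ket{\Psi}=\lambda\ket{\Psi}+\sum_{i\neq 0}\kappa_i\ket{z_i}\otimes\ket{\psi_{\rm{target}}}$, which matches~\eqref{ConditionTargetPure} after identifying $\kappa\ket{\psi_{\rm{Zeno}}^\perp}:=\sum_{i\neq 0}\kappa_i\ket{z_i}\in\mathcal{H}_0^\perp$.

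The main obstacle will be a clean extraction of the dark-state condition in the last step. One systematic route is to compute the second-order effective Liouvillian on $\ker\mathcal{D}_L$ via a Schrieffer--Wolff-type projection and require $P_t$ to lie in its kernel; a more elementary alternative is to exploit positivity of $\rho_{\rm{NESS}}$ at large finite $\Gamma$, which forces the restriction of $\rho_1$ to $\ket{\Psi}^\perp$ to be positive semi-definite and thereby eliminates the projections of $H_{i0}\ket{\psi_{\rm{target}}}$ onto $\ket{\psi_{\rm{target}}}^\perp$.
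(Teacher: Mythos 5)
Your overall route coincides with the paper's: expand $\rho_{\rm NESS}$ in powers of $1/\Gamma$, fix $\rho^{(0)}=\ket{\psi_{\rm Zeno}}\bra{\psi_{\rm Zeno}}\otimes\ket{\psi_{\rm target}}\bra{\psi_{\rm target}}$ from the Zeno kernel, and read off Eq.~(\ref{ConditionTargetPure}) from the solvability conditions $\traccazero[H,\rho^{(m)}]=0$ at orders $m=0$ and $m=1$. Your Fredholm-alternative phrasing is a clean substitute for the necessity half of lemma~\ref{lemma2}, and your order-$\Gamma^{0}$ step correctly yields $h_{00}\ket{\psi_{\rm target}}=\lambda\ket{\psi_{\rm target}}$.

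The gap is exactly where you flag it: you never actually extract the dark-state condition $h_{k0}\ket{\psi_{\rm target}}=\kappa_k\ket{\psi_{\rm target}}$ from the $m=1$ solvability condition, and of the two repairs you propose only the first one works. The effective-Liouvillian route closes as follows (and is what the paper does). Writing $\rho^{(1)}=M^{(1)}-2i\sum_{k\ge 1}\bigl(H_{k0}\rho^{(0)}-\rho^{(0)}H_{0k}\bigr)$ with $M^{(1)}\in\ker\mathcal{D}_L$, the condition $\traccazero[H,\rho^{(1)}]=0$ becomes, up to an overall factor, $-4\sum_{k\ge1}\mathcal{D}_{h_{k0}}\ketbra{0}{0}+[h_{00},m^{(1)}]=0$ with $m^{(1)}=\traccazero M^{(1)}$ still unknown. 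The key observation is that the unknown enters only through a commutator with $h_{00}$, whose diagonal matrix elements in the $h_{00}$ eigenbasis all vanish; taking the $\bra{0}\cdots\ket{0}$ element therefore eliminates $m^{(1)}$ entirely and leaves $\sum_{k\ge1}\sum_{\gamma\ge1}\lvert\bra{\gamma}h_{k0}\ket{0}\rvert^{2}=0$, a sum of nonnegative terms, whence $h_{k0}\ket{0}\propto\ket{0}$ for every $k$. By contrast, your ``more elementary'' positivity alternative does not close the gap: positivity only yields the Cauchy--Schwarz bound $\lvert\bra{w}\rho_{\rm NESS}\ket{\Psi}\rvert^{2}\le\bra{w}\rho_{\rm NESS}\ket{w}\,\bra{\Psi}\rho_{\rm NESS}\ket{\Psi}$ for $\ket{w}=\ket{e^{k}}\otimes\ket{\gamma}\perp\ket{\Psi}$, and since the offending coherences $\bra{w}\rho^{(1)}\ket{\Psi}\propto\bra{\gamma}h_{k0}\ket{0}$ enter at order $1/\Gamma$ while $\bra{w}\rho_{\rm NESS}\ket{w}$ is $O(1/\Gamma^{2})$ with an uncontrolled $\rho^{(2)}$ coefficient, the inequality is perfectly consistent with nonzero $\bra{\gamma}h_{k0}\ket{0}$. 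Positive semidefiniteness of $\rho^{(1)}$ restricted to $\ket{\Psi}^{\perp}$ likewise says nothing about these off-diagonal blocks, so you must go through the second-order solvability condition as above.
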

The proof of theorem \ref{theorem1} is given in
Appendix~\ref{SuppMat}.  Note that Eq.~(\ref{ConditionTargetPure})
entails the noncommutativity of the coherent and dissipative parts of
the dynamics on the state $\ket{\Psi}$, namely, $H L\ket{\Psi} \neq L
H\ket{\Psi}$.  Indeed, $H L\ket{\Psi}=0$, since $L\ket{\Psi}=0$ by the
Zeno regime assumption.  On the other hand, $ L H\ket{\Psi}=\kappa L
\ket{\psi_{\rm{Zeno}}^\perp} \otimes \ket{\psi_{\rm{target}}} \neq 0$,
for any nonzero $\kappa$.

Theorem 2 provides a sufficient condition for $\rho_{\rm{NESS}}$ to
converge to a pure target state in the Zeno limit and shows that this
state is approached algebraically, as $ 1-
\tr\rho_{\rm{NESS}}^2(\Gamma) = \left( \Gamma_{\rm{ch}}/\Gamma
\right)^2+ O(\Gamma^{-3}) $, everywhere, except at isolated points
where criterion~(\ref{ConditionTargetPure}) may break down.  The
measure $\epsilon(\Gamma) = 1- \tr\rho_{\rm{NESS}}^2(\Gamma)$ is
chosen as a convenient measure of purity as it is strictly positive
for a mixed state and vanishes if and only if $\rho_{\rm{NESS}}$ is
pure. Theorem 2 also gives an explicit expression for the
characteristic dissipative strength $\Gamma_{\rm{ch}}$ beyond which an
almost pure NESS can be reached.

To enunciate our result, we need to introduce some convenient
notations.  Let the vectors $\ket{e^j}$, $j=0,1,\dots,d_0-1$ form an
orthonormal basis in $\mathcal{H}_0$.  Without loss of generality, we
choose $\ket{e^0} \equiv \ket{\psi_{\rm{Zeno}}}$ and $\ket{e^{1}}
\equiv \ket{\psi_{\rm{Zeno}}^\perp}$.  Block-decompose $H$ in the
basis of $\ket{e^j} $ as
\begin{align}
  H = \sum_{j=0}^{d_0-1}\sum_{k=0}^{d_0-1} H_{jk}, \qquad H_{jk} =
  |e^{j}\rangle \langle e^k | \otimes h_{jk},
  \label{Hdecomposition}
\end{align}
where $h_{jk}= \bra{e^{j}} H \ket{e^{k}}$ is a Hamiltonian acting in
$\mathcal{H}_1$. Let the eigenvectors of $h_{00}$, $\ket{\al}$, where
$h_{00} \ket{\al}= \la_\al \ket{\al}$, form an orthonormal basis in
$\mathcal{H}_1$. Note that $h_{jk}^\dagger=h_{kj}$ due to the
Hermiticity of $H$.
\begin{theorem}[sufficient condition]
  \label{theorem2}
  Let the Hamiltonian $H$ satisfy the condition
  (\ref{ConditionTargetPure}) with $\ket{\psi_{\rm{target}}} \equiv
  \ket{0}$, $\lambda \equiv \lambda_0$, and $\kappa \neq 0$.  Then,
  $\lim_{\Gamma\to\infty} \rho_{\rm{NESS}}(\Gamma) =
  \ket{\Psi}\bra{\Psi}$, with $\ket{\Psi } = \ket{\psi_{\rm{Zeno}}}
  \otimes \ket{\psi_{\rm{target}}}$.  Moreover,
  $\lim_{\Gamma\to\infty} \Gamma^2 ( 1 - \tr
  \rho^2_\mathrm{NESS}(\Gamma) ) = \Gamma_{\rm{ch}}^2$, with
  \begin{align}
    {\Gamma_{\rm{ch}}}^2 =8 |\kappa|^2 \sum_{\alpha=1}^{d_1-1}
    \sum_{\beta=1}^{d_1-1} (K^{-1})_{\alpha \beta} R_\beta.
    \label{ResGammaCharacteristic}
  \end{align}
  The characteristic dissipative strength $\Gamma_{\rm{ch}}$ is
  expressed in terms of the inverse of the $(d_1-1) \times (d_1-1) $
  matrix $K$ with elements $K_{\alpha\beta}$,
  $\alpha,\beta=1,2,\dots,d_1-1$, given by
  \begin{align}
    K_{\alpha\beta} = \sum_{k=1}^{d_0-1} \left( \left| \bra{\alpha}
        h_{k0} \ket{\beta} \right|^2 - \delta_{\alpha,\beta}
      \bra{\alpha} h_{k0}^\dagger h_{k0} \ket{\alpha} \right),
    \label{DefK}
  \end{align}
  and the real numbers $R_\alpha = \bra{\alpha} F \ketbra{0}{0}
  F^{\dagger} \ket{\alpha}$, $\alpha=1,2,\dots,d_1-1$, where $ F =
  \sum_{k=1}^{d_0-1} \left( h_{k1} + \left[ \Lambda h_{01}, h_{k0}
    \right] \right)$, and
  \begin{align}
    \Lambda = \sum_{\alpha=1}^{d_1-1}
    \frac{1}{\lambda_\alpha-\lambda_0} | \alpha \rangle \langle \alpha
    |.
    \label{DefLambda}
  \end{align}
\end{theorem}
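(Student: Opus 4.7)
\emph{Plan.} My approach is adiabatic elimination of the fast dissipative dynamics in powers of $1/\Gamma$. Writing $\mathcal{L}=\Gamma\mathcal{D}_L+\mathcal{L}_0$ with $\mathcal{L}_0=-i[H,\cdot]$ and positing $\rho_\mathrm{NESS}(\Gamma)=\rho_0+\Gamma^{-1}\rho_1+\Gamma^{-2}\rho_2+\dots$, the stationarity equation decomposes into $\mathcal{D}_L\rho_0=0$ and $\mathcal{D}_L\rho_n=-\mathcal{L}_0\rho_{n-1}$ for $n\ge 1$. Since $L$ acts only in $\mathcal{H}_0$, $\mathcal{D}_L=\mathcal{D}_{L_0}\otimes\mathrm{Id}_{\mathcal{B}(\mathcal{H}_1)}$; together with the assumption that $\mathcal{D}_{L_0}$ has a unique pure dark state $\ket{e^0}$ (and, by trace preservation, the unique dark co-state $I_0$), this yields $\ker\mathcal{D}_L=\{\ketbra{e^0}{e^0}\otimes\sigma:\sigma\in\mathcal{B}(\mathcal{H}_1)\}$ and $\mathrm{Ran}(\mathcal{D}_L)=\{X:\traccazero X=0\}$. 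Every order therefore carries the Fredholm solvability condition $\traccazero\mathcal{L}_0\rho_{n-1}=0$, and I denote by $\mathcal{D}_L^{-1}$ the Moore--Penrose pseudoinverse on the range.

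\emph{Leading-order selection of the target.} The equation $\mathcal{D}_L\rho_0=0$ forces $\rho_0=\ketbra{e^0}{e^0}\otimes\sigma_0$. Translating condition~(\ref{ConditionTargetPure}) through the block decomposition~(\ref{Hdecomposition}) is equivalent to the three identities $h_{00}\ket{0}=\lambda_0\ket{0}$, $h_{10}\ket{0}=\kappa\ket{0}$, and $h_{k0}\ket{0}=0$ for $k\ge 2$. The first makes the order-$\Gamma^0$ solvability $[h_{00},\sigma_0]=0$ consistent with $\sigma_0=\ketbra{0}{0}$, while the other two collapse the driving to $i[H,\rho_0]=i\kappa\,\ketbra{e^1}{e^0}\otimes\ketbra{0}{0}+\mathrm{H.c.}$ Applying $\mathcal{D}_L^{-1}$ block-by-block then yields $\rho_1=\ketbra{e^1}{e^0}\otimes A+\ketbra{e^0}{e^1}\otimes A^\dagger+\ketbra{e^0}{e^0}\otimes\sigma_1$, where $A$ is explicit in $\kappa$ and the spectrum of $\mathcal{D}_{L_0}$, and $\sigma_1$ is a trace-zero kernel piece to be fixed at the next order.

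\emph{Effective rate matrix.} The next-order solvability $\traccazero[H,\rho_1]=0$ splits into two. Its off-diagonal $(\alpha\ne\beta)$ part determines $\sigma_1^{\mathrm{off}}$ through $[h_{00},\cdot]$ and introduces the energy denominators $(\lambda_\alpha-\lambda_\beta)^{-1}$ that assemble into $\Lambda$ of Eq.~(\ref{DefLambda}). Its diagonal $(\alpha\ge 1)$ part is a linear system on the diagonal entries $p_\alpha$ of $\sigma_0$ whose coefficient matrix is exactly $K$ of Eq.~(\ref{DefK}): the first term encodes the virtual cycle $\ket{e^0,\beta}\to\ket{e^k,\alpha}\to\ket{e^0,\alpha}$ mediated by $h_{k0}$ and reabsorbed by $h_{0k}=h_{k0}^\dagger$, while the second is the matching escape rate from $\ket{\alpha}$. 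Under the implicit assumption that $K$ is invertible --- operating away from the exceptional points --- this forces $p_{\alpha\ge 1}=0$, hence $\sigma_0=\ketbra{0}{0}$, proving the convergence $\rho_\mathrm{NESS}\to\ket{\Psi}\bra{\Psi}$.

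\emph{Purity formula and main obstacle.} Expanding $1-\tr\rho_\mathrm{NESS}^2=-2\Gamma^{-1}\tr(\rho_0\rho_1)-\Gamma^{-2}(\tr\rho_1^2+2\tr(\rho_0\rho_2))+O(\Gamma^{-3})$ and checking that the $\Gamma^{-1}$ coefficient vanishes by the structure of $\rho_1$, the remaining task is to assemble the $\Gamma^{-2}$ coefficient. Collecting the virtual transitions that migrate from the $\ketbra{e^1}{e^0}$-block of $\rho_1$ back into $\mathcal{H}_1$ produces the operator $F=\sum_{k\ge 1}(h_{k1}+[\Lambda h_{01},h_{k0}])$, and the surviving terms reorganize into $8|\kappa|^2\sum_{\alpha,\beta\ge 1}(K^{-1})_{\alpha\beta}R_\beta$ with $R_\beta=|\bra{\beta}F\ket{0}|^2$. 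The main obstacle is this final reorganization --- tracking how the $[\Lambda h_{01},h_{k0}]$ piece of $F$ arises from the nested pseudoinverse on first-order off-diagonal blocks, and arriving at the prefactor $8$ after combining the square of the off-diagonal part of $\rho_1$ with the $(0,0)$-matrix elements of $\sigma_1$ and $\sigma_2$ --- while the identification of the exceptional points with the zeros of $\det K$ is an immediate corollary.
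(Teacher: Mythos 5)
Your overall strategy coincides with the paper's: expand $\rho_{\rm NESS}$ in powers of $1/\Gamma$, use the Fredholm alternative for $\mathcal{D}_L$ (the content of lemma~\ref{lemma2}) to turn each order into a solvability condition $\traccazero[H,\rho^{(m)}]=0$, and read off $\Lambda$ and $K$ from the $(\alpha,0)$ and diagonal parts of those conditions. Your treatment of the convergence claim is sound and in one respect tidier than the paper's: you keep $\sigma_0$ general and show that the population part of the second solvability condition is the linear system $\sum_{\beta\ge 1}K_{\alpha\beta}p_\beta=0$, so invertibility of $K$ forces $\sigma_0=\ketbra{0}{0}$, whereas the paper posits the pure $\rho^{(0)}$, verifies consistency order by order, and appeals to uniqueness of the NESS (the same $K$-system then reappears there acting on the diagonal of the first-order kernel piece $m^{(1)}_V$, forcing it to vanish).

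The gap is in the quantitative half of the theorem. First, $\tr(\rho_0\rho_1)=M^{(1)}_{00}$ is a kernel coefficient of $\rho_1$ that is \emph{not} fixed by the structure of $\rho_1$ at first order; its vanishing, which is what makes $1-\tr\rho_{\rm NESS}^2$ start at order $\Gamma^{-2}$ rather than $\Gamma^{-1}$, follows only from the order-$m=2$ condition $Z_{00}=0$ (Eq.~(\ref{Eq M1_00=0})), or equivalently from $\tr M^{(1)}=0$ combined with $m^{(1)}_V=0$; you assert it without argument. Second, and more importantly, formula~(\ref{ResGammaCharacteristic}) is never actually derived: obtaining it requires solving the order-$m=2$ conditions for the first row and column of $M^{(2)}$ (Eq.~(\ref{Eq Cond2adiag})), then pushing the expansion to order $m=3$ to obtain the closed system $\sum_\beta K_{\alpha\beta}q_\beta+\bra{\alpha}\mathcal{F}\Phi\mathcal{F}^\dagger\ket{\alpha}=0$ for the diagonal of $\tilde m^{(2)}_V$, and finally combining $\Gamma_{\rm ch}^2=-2\sum_\alpha|M^{(1)}_{0\alpha}|^2-2M^{(2)}_{00}=\sum_{\alpha>0}q_\alpha$ with Eq.~(\ref{ResM1PhiWithLambda}) to produce $F$, the prefactor $8|\kappa|^2$, and $K^{-1}$. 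You correctly name these ingredients but explicitly defer the computation as ``the main obstacle''; since that computation is precisely the content of the second assertion of the theorem, the proposal as written establishes only the convergence statement, not the value of $\Gamma_{\rm ch}$.
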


Criterion (\ref{ConditionTargetPure}) and
Eqs.~(\ref{ResGammaCharacteristic})-(\ref{DefLambda}) are our main
results. Actually, for a chosen $\ket{\psi_{\rm{target}}}$, they allow
one to find a system (to construct a Hamiltonian $H$) which, once
coupled to a dissipator of the form (\ref{DefDissipator}) via the
Lindblad Eq.~(\ref{LME}) with a dissipative strength $\Gamma \gg
\Gamma_\textrm{ch}$, admits the target state $\ket{\Psi } =
\ket{\psi_{\rm{Zeno}}} \otimes \ket{\psi_{\rm{target}}}$ as NESS.
Note that theorem \ref{theorem2} does not just provide the
characteristic dissipative strength, but imposes further restrictions
on $H$ by implicitly requiring a nondivergent $\Gamma_\textrm{ch}$,
see below.  For this reason, theorems \ref{theorem1} and
\ref{theorem2} can be viewed as a \textit{rough criterion} (theorem
\ref{theorem1}) and \textit{refined criterion} (theorem
\ref{theorem2}) for a pure NESS generation.

Two lemmas given in the Appendix~\ref{SuppMat} provide necessary the
technical background for the proof of the theorems. Lemma \ref{lemma1}
shows that the Lindblad operator $L$ is not arbitrary, but is actually
fixed, up to a similarity transformation, by the assumption that
$\mathcal{D}_L$ has a nondegenerate and pure kernel in
$\mathcal{H}_0$. The dissipator (\ref{DefDissipator}) is then
completely fixed.  In lemma \ref{lemma2}, we calculate the dissipator
inverse $\mathcal{D}_L^{-1}$, that appears in the recurrence relation
(\ref{RecurrenceRho}).  Lemma \ref{lemma2} also gives a condition for
the existence of the dissipator inverse, leading to the secular
conditions (\ref{Cond1}).

The main idea behind the demonstration of the theorems is relatively
simple.  We start assuming that there exists an expansion of the NESS
in powers of $1/\Gamma$,
\begin{align}
  \rho_{\rm{NESS}}(\Gamma) = \sum_{m=0}^\infty
  \frac{\rho^{(m)}}{\Gamma^m},
  \label{NessExpansion.1}
\end{align}
convergent for sufficiently large $\Gamma$, let us say
$1/\Gamma<1/\Gamma_\mathrm{cr}$. In general, we are not able to
evaluate the critical dissipative strength $\Gamma_\mathrm{cr}$ (not
to be confused with $\Gamma_{\rm{ch}}$), but we always suppose it to
be inside the convergence disk of the series.  Plugging the expansion
(\ref{NessExpansion.1}) into the Lindblad equation (\ref{LME}) and
comparing order by order, we get a series of conditions for the
existence of the terms $\rho^{(m)}$.  The conditions for the existence
of $\rho^{(1)}$ provide Eq.~(\ref{ConditionTargetPure}).  In theorem
\ref{theorem2}, viceversa, the hypothesis that $H$ satisfies the
criterion (\ref{ConditionTargetPure}) together with the conditions for
the existence of $\rho^{(m)}$ at any order $m$ permits one to conclude
that $\lim_{\Gamma\to\infty} \rho_{\rm{NESS}}(\Gamma) =
\ket{\psi_{\rm{Zeno}}}\bra{\psi_{\rm{Zeno}}} \otimes
\ket{\psi_{\rm{target}}}\bra{\psi_{\rm{target}}}$ and also to evaluate
$\lim_{\Gamma\to\infty} \Gamma^2 ( 1- \tr \rho^2_{\rm{NESS}}(\Gamma))
= \Gamma_{\rm{ch}}^2$.

The characteristic dissipative strength $\Gamma_{\rm{ch}}$ may diverge
at two different kinds of singularities: those associated with the
poles of $\Lambda$ in Eq.~(\ref{DefLambda}) and those arising from the
noninvertibility points of the matrix $K$ defined by Eq.~(\ref{DefK}).
Both kinds of singularities signal an inconsistency of the $1/\Gamma$
power expansion converging to the pure NESS in the Zeno limit.  At the
singular points, the NESS in the Zeno limit generically is not pure.
It may also happen that $\Gamma_{\rm{ch}}=0$. In this case, the speed
of convergence to zero of the purity $\epsilon(\Gamma)$ must be
evaluated at the next $1/\Gamma$ order.

The singularities connected with the poles of $\Lambda$ are the most
interesting (see Fig.~\ref{Fidelity&Gammach_vs_lambda}). They appear
when the eigenvalue $\lambda_0$ of $h_{00}$, corresponding to the
eigenstate $\ket{0}$, becomes degenerate. There can be at most $d_1-1$
singularity points $\lambda_0=\lambda_\beta$, $\beta=1,\ldots d_1-1$.
Occasionally, by vanishing of the matrix element
$\bra{\alpha}\left[\Lambda h_{01},h_{k0} \right] \ket{0} $, the
$\al$-th singularity can be removed.

On the other hand, singularities due to non-invertibility of the
matrix $K$ are rare since they occur only by simultaneous vanishing of
several entries of the Hamiltonian (see Appendix~\ref{SuppMat}).  We
demonstrate both kinds of singularities in the following example.

\begin{figure}[t]
  \begin{center}
    \includegraphics[width=0.99\columnwidth,clip]{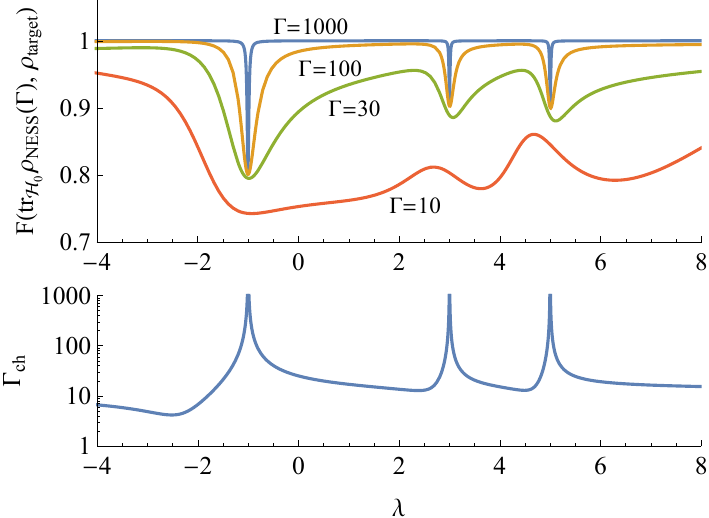}
    \caption{System of $N=3$ qubits with Lindblad operator
      $L=\sigma_1^\dag$ and target state
      $\ket{\psi_{\rm{Bell}}}=\textstyle{\frac{1}{\sqrt{2}}} \left(
        \ket{\uparrow_2 \uparrow_3} + \ket{\downarrow_2
          \downarrow_3}\right)$.  Bottom panel: characteristic
      dissipative strength $\Gamma_{\rm{ch}}$ vs $\lambda$. Top panel:
      fidelity $F\left( \traccazero\rho_{\rm{NESS}}(\Gamma),
        \rho_{\rm{target}} \right)$ vs $\lambda$ for
      $\Gamma=10,30,100,1000$.  For $\Gamma\gg \Gamma_{\rm{ch}}$, the
      NESS approaches the pure target state
      $\ket{\uparrow_1}\otimes\ket{\psi_{\rm{Bell}}}$ for all
      $\lambda$ except the degeneracy points $\lambda=\lambda_\alpha$,
      $\alpha=1,2,3$.  The Hamiltonian $H$ of the model is defined by
      specifying the matrix elements of the blocks
      $h_{00},h_{01},h_{10},h_{11}$ in the orthonormal basis
      $\{\ket{\alpha}\}$, where $\ket{0}=\ket{\psi_{\rm{Bell}}}$,
      $\ket{1}=\ket{\uparrow_2\downarrow_3}$,
      $\ket{2}=\ket{\downarrow_2\uparrow_3}$, $\ket{3}=
      \frac{1}{\sqrt{2}} \left( \ket{\uparrow_2 \uparrow_3} -
        \ket{\downarrow_2 \downarrow_3}\right)$.  We set
      $\bra{\alpha}h_{00}\ket{\beta}=\lambda_\alpha
      \delta_{\alpha,\beta}$, with $\lambda_0\equiv\lambda$,
      $\lambda_1=-1$, $\lambda_2=3$ and $\lambda_3=5$,
      $\bra{0}h_{10}\ket{0}=0.7$ and $\bra{\alpha}h_{10}\ket{0}=0$,
      with $\alpha=1,2,3$.  All the other matrix elements are assigned
      according to the following formula:
      $\bra{\alpha}h_{01}\ket{\beta}=Q_{\alpha+1,\beta+5}+\hc$,
      $\bra{\alpha}h_{10}\ket{\beta}=Q_{\alpha+5,\beta+1}+\hc$,
      $\bra{\alpha}h_{11}\ket{\beta}=\frac{1}{2}
      \left(Q_{\alpha+5,\beta+5}+\hc\right)$, where
      $Q_{k,j}=i^{k-j}\left(\mathrm{mod}\left(\left\lfloor 7 \tan(k^7
            j^4) \right\rfloor,2\right)+0.7\right)$.  Note that for
      each value of the parameter $\lambda$, we have a different
      Hamiltonian $H(\lambda)$. }
    \label{Fidelity&Gammach_vs_lambda}
  \end{center}
\end{figure}
\begin{figure}[t]
  \begin{center}
    \includegraphics[width=0.99\columnwidth,clip]{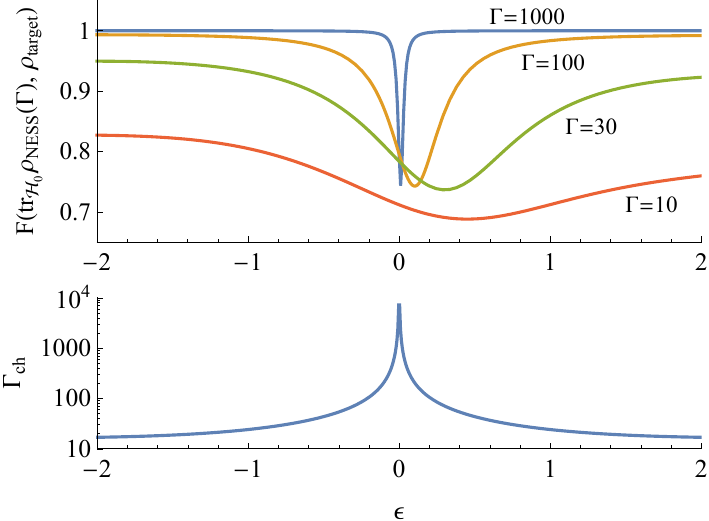}
    \caption{ As in Fig.~\ref{Fidelity&Gammach_vs_lambda}, but with
      the characteristic dissipative strength and the fidelity plotted
      as a function of the Hamiltonian parameter $\eps =
      \bra{0}h_{10}\ket{1}$. We have also set
      $\bra{0}h_{10}\ket{\al}=0$, for $\alpha=2,3$, while the rest of
      the Hamiltonian matrix elements are equal to those of
      Fig.~\ref{Fidelity&Gammach_vs_lambda} with $\lambda=1$.  For
      $\Gamma\gg \Gamma_{\rm{ch}}$, the NESS approaches the pure
      target state $\ket{\uparrow_1}\otimes\ket{\psi_{\rm{Bell}}}$ for
      all $\eps$ except the singular point $\eps=0$, where the matrix
      $K$ of Eq.~(\ref{DefK}) is not invertible. }
    \label{Fidelity&Gammach_vs_epsilon}
  \end{center}
\end{figure}

\section{Two-qubit Bell state}
As the simplest nontrivial example, still interesting from an
applicative point of view, we target a Bell state of two qubits.  As
shown in Fig.~\ref{BellState_Targeting}, to realize this state we
consider a system of $N=3$ qubits and a Lindblad operator acting on
the first qubit only, $L=\sigma^+ \otimes I \otimes I$.  Note that a
Bell state of two qubits cannot be a dark state of the operators
$\sigma^{+} \otimes I$ or $I \otimes \sigma^{+}$.  Therefore, for a
dissipative generation of a Bell state of two qubits in the Zeno
regime at least three qubits are needed. The degrees of freedom of one
qubit, that we designate as qubit 1, play the role of the auxiliary
subspace $\mathcal{H}_0$.  In the Hilbert subspace $\mathcal{H}_0
\equiv \mathbb{C}_2$, we choose the standard basis vectors $\{|e^{0}
\rangle,|e^{1} \rangle\}= \{\ket{\uparrow},\ket{\downarrow}\}$.  Note
that $\ket{e^0} \equiv \ket{\psi_{\rm{Zeno}}}$ is the unique pure
state of $\mathcal{H}_0$ such that $\mathcal{D}_L
\ket{\psi_{\rm{Zeno}}}\bra{\psi_{\rm{Zeno}}}=0$. We also have,
necessarily, in view of the dimension $d_0=2$, $\ket{e^1} \equiv
\ket{\psi_{\rm{Zeno}}^\perp}$.  In the complement Hilbert subspace
$\mathcal{H}_1 \equiv {\mathbb{C}_2} \otimes \mathbb{C}_2$, the target
state reads
\begin{align}
  \ket{\psi_\mathrm{Bell}}=\frac{1}{\sqrt{2}} \left( \ket{\uparrow_2
      \uparrow_3} + \ket{\downarrow_2 \downarrow_3}\right),
\end{align}
where subscripts denote the embedding.  To make sure that the Lindblad
Eq.~(\ref{LME}) admits, in the Zeno limit, the pure NESS
$\ket{\uparrow}\bra{\uparrow} \otimes
\ket{\psi_{\rm{Bell}}}\bra{\psi_{\rm{Bell}}}$, we just have to find a
Hamiltonian $H$ that satisfies the
criterion~(\ref{ConditionTargetPure}) with $\ket{\psi_{\rm{Bell}}}
\equiv \ket{0}$, $\lambda \equiv \lambda_0$, and $\kappa \neq 0$.
Specifically, this amounts to determining the matrix elements of the
blocks $h_{00},h_{01},h_{10},h_{11}$ of $H$ in the orthonormal basis
$\{\ket{\alpha}\}_{\alpha=0}^{3}$, constituted by the eigenvectors of
$h_{00}$, $h_{00} \ket{\alpha} = \lambda_a \ket{\alpha}$, with the
constraint $\ket{0}=\ket{\psi_{\rm{Bell}}}$.  For simplicity, let us,
first, choose the basis of $\mathcal{H}_1$ given by
$\ket{0}=\ket{\psi_{\rm{Bell}}}$,
$\ket{1}=\ket{\uparrow_2\downarrow_3}$,
$\ket{2}=\ket{\downarrow_2\uparrow_3}$, and $\ket{3}=
\frac{1}{\sqrt{2}} \left( \ket{\uparrow_2 \uparrow_3} -
  \ket{\downarrow_2 \downarrow_3}\right)$, and then define $h_{00}$ in
this basis as the diagonal matrix
$h_{00}=\mathrm{diag}(\lambda_0,\lambda_1,\lambda_2,\lambda_3)$. The
criterion~(\ref{ConditionTargetPure}) just imposes the following other
conditions: $\bra{0}h_{10}\ket{0}=\kappa$ and
$\bra{\alpha}h_{10}\ket{0}=0$, $\alpha=1,2,3$. For the rest, besides
being Hermitian, the Hamiltonian $H$ can be chosen in an arbitrary
way, as long as the NESS remains unique.

In Fig.~\ref{Fidelity&Gammach_vs_lambda}, with $H$ chosen as specified
in the caption, we show, as a function of the parameter $\lambda
\equiv \lambda_0$, the behavior of the fidelity between the target
state $\rho_{\rm{target}}=
\ket{\psi_{\rm{Bell}}}\bra{\psi_{\rm{Bell}}}$ and the stationary
solution of Eq.~(\ref{LME}) traced in $\mathcal{H}_0$, with
$\rho_{\rm{NESS}}(\Gamma)$ evaluated numerically for different values
of $\Gamma$.  In the bottom panel, for the same values of $\lambda$,
we display the characteristic dissipative strength $\Gamma_{\rm{ch}}$
calculated according to Eq.~(\ref{ResGammaCharacteristic}).  It is
evident that, whenever $\Gamma \gg \Gamma_{\rm{ch}}$, the fidelity is
close to 1, i.e., $\traccazero \rho_{\rm{NESS}}(\Gamma)$ approaches
the pure target state.  This is true for any $\lambda$ except the
degeneracy points $\lambda=\lambda_\alpha$, $\alpha=1,2,3$, where the
matrix $\Lambda$ of Eq.~(\ref{DefLambda}) is singular and
$\Gamma_{\rm{ch}}$ diverges. At these points, $\lim_{\Gamma\to\infty}
\rho_{\rm{NESS}}(\Gamma)$ exists but is not a pure state; therefore
the fidelity remains different from 1 for arbitrarily large $\Gamma$.

Consider the Hamiltonian $H$ obtained from that of
Fig.~\ref{Fidelity&Gammach_vs_lambda} with $\lambda=1$, by zeroing all
the elements $\bra{0}h_{10}\ket{\al}=0$, for $\al>1$, and putting
$\bra{0}h_{10}\ket{1}=\eps$. If $\eps=0$, the matrix $K$ evaluated as
prescribed by Eq.~(\ref{DefK}) is a stochastic matrix and $\det K=0$
as a consequence of the Perron-Frobenius theorem.  For $\eps\neq 0$,
$\det K \neq 0$ and $K$ is invertible. Then, according to
Eq.~(\ref{ResGammaCharacteristic}), we expect a singularity at the
point $\eps=0$, where $K^{-1}$ does not exist. This is exactly the
singular behavior which emerges for large $\Gamma$ in
Fig.~\ref{Fidelity&Gammach_vs_epsilon} at $\eps=0$ .

\section{Arbitrary $\bm{N}$-qubit pure state}
It is possible to generalize the ``Bell state'' example provided for a
system with $2+1$ qubits, to an arbitrary target pure state in a
system of $N+1$ qubits.  Dissipation is made to act only on one qubit,
say qubit 1, which defines an auxiliary space and allows the reaching
of the target state for the remaining $N$ qubits.  A \textit{minimal}
Hamiltonian realizing this scenario is given below. For this minimal
Hamiltonian, we are able to calculate the NESS analytically for
arbitrary dissipative strengths $\Gamma$.

Let us denote by $\ket{e^{0}0}\equiv\ket{e^0}\otimes\ket{0}$ a pure
state to be targeted in the Zeno limit.  Notations are as above,
namely, $\ket{e^0}=\ket{\psi_\mathrm{Zeno}}$ and
$\ket{0}=\ket{\psi_\mathrm{target}}$.  The evolution of the density
matrix is described by Eq.~(\ref{LME}) with one local Lindblad
operator $L=\sigma_1^{+}$ acting
on the auxiliary spin $1$ only.  The Lindblad operator $L$ projects
the first spin onto the state $\ket{e^{0}}\equiv\ket{\uparrow}$, while
$\ket{e^{1}} \equiv \ket{\downarrow}$ is the vector completing the
basis of $\mathcal{H}_0 = \mathbb{C}_2$.  The state $\ket{0}$ is an
arbitrary vector in $\mathcal{H}_1 =
\left(\mathbb{C}_2\right)^{\otimes_N}$, the Hilbert space of the
remaining $N$ spins.

A simple Hamiltonian satisfying Eqs.~(\ref{ConditionTargetPure}) and
(\ref{Cond Invertibility}) has the form
\begin{align}
  H =& \sum_{\al=0}^{2^{N}-1} \left( \lambda_\alpha
    \ket{e^{0}\al}\bra{e^{0}\al} + d_\al \ket{e^{1}\al}\bra{e^{1}\al}
  \right) \nonumber\\ &+
  \left( \kappa\ket{e^{0}0}\bra{e^{1}0}+\hc\right)\nonumber\\
  &+\sum_{\al=0}^{2^{N}-2} \left(
    \eta_{\al}\ket{e^{0}(\al+1)}\bra{e^{1}\al}+\hc \right)
  \nonumber\\ &+\sum_{\al=1}^{2^{N}-2} \sum_{\be>\al}^{2^{N}-2}
  \left( d_{\al\be}\ket{e^{1}\al}\bra{e^{1}\be}+\hc\right),
  \label{ManyQubitsHamiltonian}%
\end{align}
where $\hc$ denotes the Hermitian conjugate. The terms proportional to
$\lambda_0$ and $\kappa$ guarantee condition
(\ref{ConditionTargetPure}) while the terms proportional to
$\eta_{\al}$ lift up the degeneracy of the Liouvillean. (Both $\kappa$
and all $\eta_{\al}$ are required to be nonzero.) All the other
parameters are free parameters.  The pure state $\ket{e^{0}0}$ is
exactly realized as NESS only in the Zeno limit.

It can be checked that the recurrence conditions (\ref{RecurrenceRho})
for the Hamiltonian (\ref{ManyQubitsHamiltonian}) simplify to
\begin{align}
  \rho^{\left(m+2\right)} =
  -\frac{4\zeta}{(\lambda_0-\lambda_1)^{2}}\rho^{\left(m\right)},
\end{align}
where
\begin{align}
  \rho^{\left(1\right)} = 2\mathrm{i}\kappa
  \left(\frac{\eta_{0}^{*}}{\lambda_0-\lambda_1}
    \ket{e^{0}0}\bra{e^{0}1}+\ket{e^{0}0}\bra{e^{1}0}\right) +\hc,
\end{align}
\begin{widetext}
  \begin{align}
    \rho^{\left(2\right)}=&
    4\left|\kappa\right|^{2}\left(-\frac{\left|\kappa\right|^{2}+
        \left|\eta_{0}\right|^{2}+(\lambda_0-\lambda_1)^{2}}
      {(\lambda_0-\lambda_1)^{2}} \ket{e^{0}0}\bra{e^{0}0} +
      \frac{\left|\kappa\right|^{2}+\left|\eta_{0}\right|^{2}}
      {(\lambda_0-\lambda_1)^{2}} \ket{e^{0}1}\bra{e^{0}1}
      +\ket{e^{1}0}\bra{e^{1}0} \right) \nonumber\\
    &+
    4\kappa\left(\eta_{0}^{*}\frac{\lambda_0^{2}-\left|\kappa\right|^{2}
        -\left|\eta_{0}\right|^{2}-\lambda_0\lambda_1-d_0
        (\lambda_0-\lambda_1)}{(\lambda_0-\lambda_1)^{2}}
      \ket{e^{0}0}\bra{e^{0}1}
      +\frac{\kappa^{*}\eta_{0}}{\lambda_0-\lambda_1}\ket{e^{0}1}
      \bra{e^{1}0}
    \right.\nonumber\\
    &\left.\phantom{++ 4\kappa}+
      \frac{\lambda_0^{2}-\left|\eta_{0}\right|^{2}-\lambda_0\lambda_1
        -d_0(\lambda_0-\lambda_1)}{\lambda_0-\lambda_1}
      \ket{e^{0}0}\bra{e^{1}0}+\hc\right),
  \end{align}
\end{widetext}
and
\begin{align}
  \zeta =&
  \left(\lambda_0^{2}-\lambda_0\lambda_1-\left|\eta_{0}\right|^{2}
    -d_0(\lambda_0-\lambda_1) \right)^{2} \nonumber\\
  &+\left|\kappa\right|^{2}\left( |\kappa|^{2}+2\left((
      \lambda_0-\lambda_1)^{2} + |\eta_{0}|^{2}\right)\right).
\end{align}
As a result, the NESS can be written as
\begin{align}
  \rho_\mathrm{NESS} =& \ket{e^{0}0}\bra{e^{0}0} \nonumber\\ &+
  \frac{1}{\Gamma^2} \sum_{n=0}^{\infty}\left(\frac{-4\zeta}
    {(\lambda_0-\lambda_1)^{2}\Gamma^2}\right)^{n}
  \left(\Gamma\rho^{\left(1\right)}+\rho^{\left(2\right)}\right).
  \label{ExactNESSexpansion}%
\end{align}
Summing up the geometrical series, we obtain
\begin{align}
  \rho_\mathrm{NESS} = \ket{e^{0}0}\bra{e^{0}0} +
  \frac{(\lambda_0-\lambda_1)^{2}}
  {4\zeta+(\lambda_0-\lambda_1)^{2}\Gamma^{2}}
  \left(\Gamma\rho^{\left(1\right)}+\rho^{\left(2\right)}\right).
  \label{ExactNESSsummed}
\end{align}
Formally, the Taylor expansion (\ref{ExactNESSexpansion}) converges
for $\Gamma>
\Gamma_\mathrm{cr}=2\sqrt{\zeta}/|\lambda_0-\lambda_1|$. However, the
analytical formula (\ref{ExactNESSsummed}) has no singularities and is
valid for arbitrary $\Gamma$.  Note that the NESS does not depend on
terms proportional to $d_\al$ for $\al\geq 1$ and $d_{\al \be}$.

\begin{figure}[t]
  \begin{center}
    \includegraphics[width=0.99\columnwidth,clip]{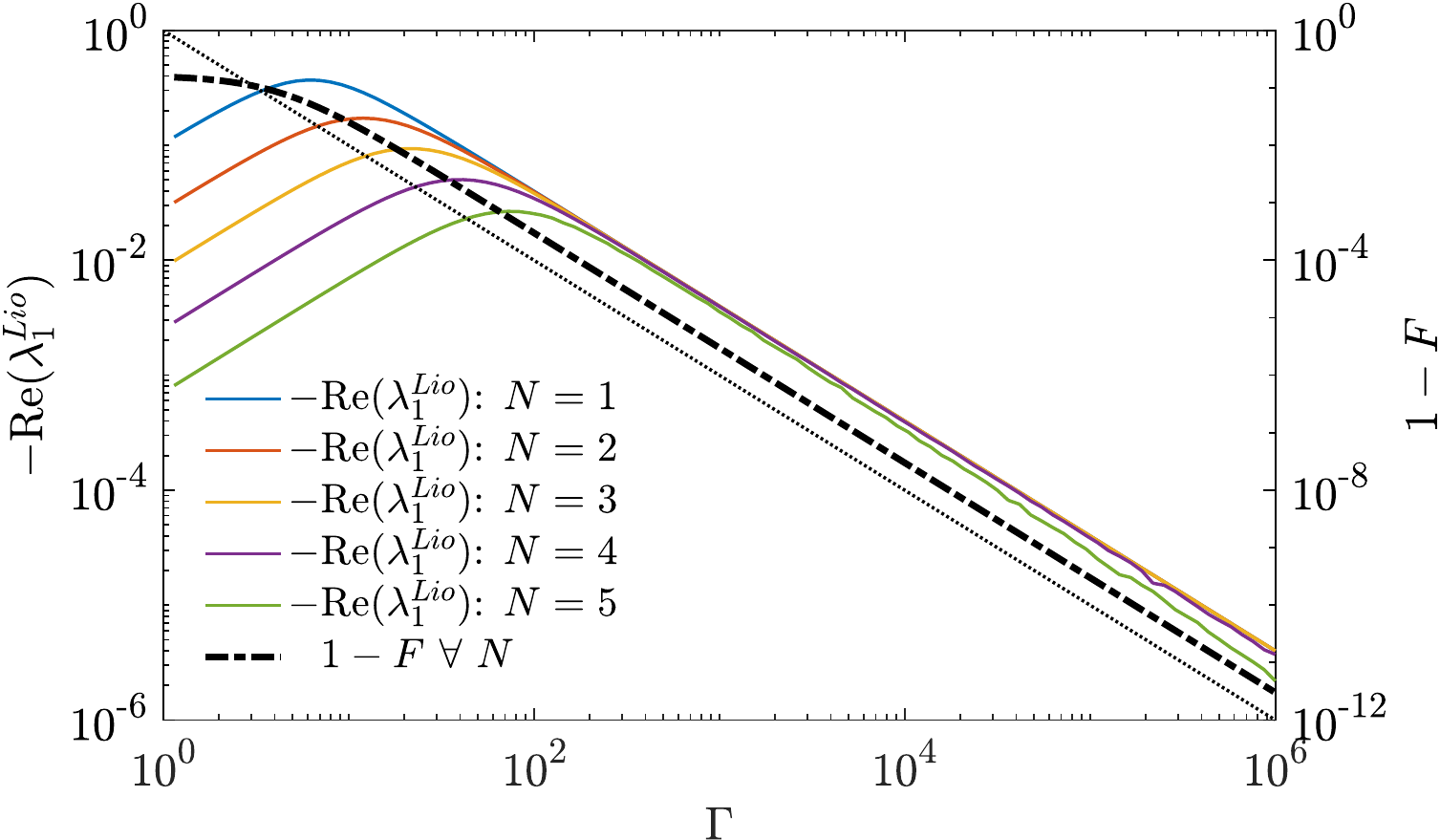}
    \caption{ Complement to 1 of the fidelity of the NESS (dot-dashed
      bold line, right vertical axis) and minus the real part of the
      Liouvillian gap versus $\Gamma$ for $N=1,2,3,4,5$ (solid lines
      from top to bottom, left vertical axis).  Note that $1-F$ is
      independent of $N$, and so become, for large $\Gamma$, the
      quantities $-\mathrm{Re}(\lambda_1^{Lio})$.  Parameters:
      $\lambda_{\alpha}=1+\alpha+\sqrt{\alpha},\kappa=1,\eta_{\alpha}=1\
      \forall\alpha, d_{\alpha}=0\ \forall\alpha,
      d_{\alpha\beta}=0$. The tiny dotted line, given by $y=1/x$ for
      the left $y$ axis, and by $y=1/x^2$ for the right $y$ axis, is a
      guide for the eyes.  }
    \label{Fig_Fidelity&Gap_vs_Gamma}
  \end{center}
\end{figure}

The purity measure $1-\tr\rho^2_\mathrm{NESS}(\Gamma)$ of the NESS for
finite dissipative strength is easily calculated.  The Taylor
expansion for large $\Gamma$ of formula (\ref{ExactNESSsummed}) gives
\begin{align}
  1-\tr\left(\rho_\mathrm{NESS}^{2}\right) =
  \frac{8\left|\kappa\right|^{4}}{(\lambda_0-\lambda_1)^{2}\Gamma^{2}}
  +\mathcal{O}\left(\Gamma^{-3}\right).
  \label{Tr(r-rr)INF}%
\end{align}
From this expression, we read off the characteristic dissipative
strength
\begin{align}
  \Gamma_\mathrm{ch} = \frac{ \sqrt{8} \left|\kappa\right|^{2}}
  {\left|\la_0-\la_1\right|}.
  \label{ResGammaCharacteristicExample3}%
\end{align}
This result can also be obtained directly by using
Eq.~(\ref{ResGammaCharacteristic}).  Note that
Eq.~(\ref{ResGammaCharacteristicExample3}) displays a divergence for
$\lambda_1=\lambda_0$, but not for other points of degeneracy of
$h_{00}$, i.e., $\lambda_\alpha=\lambda_0$, $\alpha>1$. This happens
because the matrix elements $\langle\alpha|h_{01}|0\rangle$ vanish for
all $\alpha>1$, so that system of Eqs.~(\ref{LinearSystemForM1Phi})
leads to a divergence only for $\alpha=1$ for which
$\langle\alpha|h_{01}|0\rangle \neq 0$.

Finally, the fidelity of the NESS with respect to the ideal target
state $\ket{e^0 0}$, namely, $F= \sqrt{\bra{e^0 0} \rho_\mathrm{NESS}
  \ket{e^0 0} }$, can be explicitly calculated from
Eq.~(\ref{ExactNESSsummed}) to be
\begin{align}
  F &=\sqrt{1-\frac{4\left|\kappa\right|^{2}
      \left( \left(\lambda_{0}-\lambda_{1}\right)^{2}+\left|\kappa\right|^{2}+
        \left|\eta_{0}\right|^{2}\right) }{4\zeta+
      \left(\lambda_{0}-\lambda_{1}\right)^{2}\Gamma^{2}}},
  \label{FidelityExample3}
\end{align}
while the Taylor expansion of $1-F$ for large $\Gamma$ is given by

\begin{align}
  1-F&=\frac{2\left|\kappa\right|^{2}
    \left( \left(\lambda_{0}-\lambda_{1}\right)^{2}+\left|\kappa\right|^{2}+
      \left|\eta_{0}\right|^{2}\right) }
  {\left(\lambda_{0}-\lambda_{1}\right)^{2}\Gamma^{2}}
  +\mathcal{O}\left(\Gamma^{-4}\right).
\end{align}

Figure~\ref{Fig_Fidelity&Gap_vs_Gamma} shows the fidelity $F$ of the
NESS, actually, its complement to 1, and the inverse of the system
relaxation time $\tau_\mathrm{relax}$, namely, minus the real part of
the Liouvillian gap, as a function of $\Gamma$ for different (small)
system sizes.  We observe that $1-F\sim \Gamma^{-2}$ and
$\tau_\mathrm{relax}^{-1} \sim \Gamma^{-1}$, at least asymptotically
in $\Gamma$. We conclude that the inverse relaxation time grows
sublinearly with $1-F$, i.e.,
\begin{align}
  \tau_\mathrm{relax} &\sim \left(1-F \right)^{-\frac{1}{2}},
\end{align}
for $1-F$ small enough.  Remarkably, both the asymptotic fidelity and
the asymptotic relaxation time do not depend on the system size $N$.

\section{Factorized spin-helix state of $\bm{N}$ qubits}
Our third example concerns targeting a factorized spin-helix pure
state in a chain of $N$ qubits (spins 1/2), which, in the absence of
dissipation, are supposed to evolve according to the paradigmatic
Heisenberg Hamiltonian.  The spin-helix state in a chain of $N$ qubits
has the form
\begin{align}
  \ket{\Psi} = \bigotimes_{k=1}^N \left(
    \begin{array}{c}
      \cos(\frac{\theta}{2})  e^{-i\varphi_k/2}
      \\
      \sin(\frac{\theta}{2})  e^{i\varphi_k/2}
    \end{array}
  \right), \quad \varphi_k = \gamma(k-1).
  \label{XXZtargetedstate}
\end{align}
This state describes a precession of the local spin along the chain
with fixed orbital angle $\theta$ and polar angle $\varphi$
homogeneously increasing as $\varphi_{k+1}-\varphi_{k}= \gamma$. The
boundary conditions $\theta_1=\theta, \varphi_1=0$ and
$\theta_N=\theta, \varphi_N=\gamma(N-1)=\phi$ can be realized by
coupling the first and last spins of the chain to two fully polarizing
baths.  In other words, we can consider a Lindblad master equation
like (\ref{LME}) with two dissipators, $\mathcal{D}_{L_L}$ and
$\mathcal{D}_{L_R}$, of the form (\ref{DefDissipator}), associated
with the left and right Lindblad operators, $L_L$ and $L_R$, acting
only on spin 1 and spin $N$, respectively. See
Fig.~\ref{SpinHelixState_Targeting} for an example with
$\theta=\pi/2$. More details on the setting of the model and its
properties in various limits can be found
in~\cite{XYweak,XYtwist,MPA,MPA-PRE2013,ProsenExact2011}.

A possible protocol leading to the evolution (\ref{LME}) with local
Lindblad operators $\mathcal{D}_{L_L}$, $\mathcal{D}_{L_R}$ can be
found in the Appendix A of~\cite{Landi}.  Despite the introduction of
a second dissipator, for any chain of $N>3$ spins with first-neighbor
interaction, $L_L$ and $L_R$ operate independently. It can be shown
that the conditions for the existence of the terms $\rho^{(m)}$ in the
expansion~(\ref{NessExpansion.1}) have exactly the same form of
Eq.~(\ref{Cond1}) in Appendix~\ref{SuppMat}, provided that now we
intend $\mathcal{H}_0 \equiv \mathbb{C}_2 \otimes \mathbb{C}_2 $.  We
conclude that our theorems \ref{theorem1} and \ref{theorem2} apply
unchanged.  We find that the criterion~(\ref{ConditionTargetPure}) for
the spin-helix state~(\ref{XXZtargetedstate}) with $\lambda=0$ and
$\kappa=-i \sqrt{2} \sin \theta \sin \gamma$ is satisfied by the
celebrated $XXZ$ Heisenberg Hamiltonian,
\begin{align}
  H_{XXZ}= \sum_{j=1}^{N-1} \left( \sigma_{j}^{x}\sigma_{j+1}^{x}+
    \sigma_{j}^{y}\sigma_{j+1}^{y}+ \cos\gamma
    \sigma_{j}^{z}\sigma_{j+1}^{z} \right).
  \label{HamiltonianXXZ}
\end{align}
Note that the $Z$-axis anisotropy is tuned to $\Delta=\cos \gamma$.
For $\theta=\pi/2$, convergence to a spin-helix state in the Zeno
limit was suggested in~\cite{PhysRevA.93.022111} via a different
method.  The factorized state~(\ref{XXZtargetedstate}) is a rather
unusual NESS for a quantum many-body driven system.  From a quantum
transport viewpoint, it carries a ballistic, system-size independent
magnetization current. It is characterized via a macroscopic winding
number, which counts the number of full rotations of the local
magnetization vector along the chain.  Moreover, since the actions of
the Lindblad operators and of the Hamiltonian on the
state~(\ref{XXZtargetedstate}) do not commute, the spin-helix state
cannot be obtained within the strong criterion of
Eqs. (\ref{Yamamoto1}) and (\ref{Yamamoto2}). On the other hand, by
our weaker criterion the spin-helix state can be spotted and
approached with arbitrary fidelity for sufficiently large dissipation.
Further elaborate analysis shows that for any angle $\gamma$ which is
a rational of $\pi$, $\gamma/\pi=n/m$, with integer $n,m$, convergence
to a spin-helix state breaks down for systems of size $N\geq
m+1$. This rather surprisingly behavior is an involved prediction of
present theorem~\ref{theorem2}. Details can be found
in~\cite{PSP2017}.  Note that, even though it generically takes longer
for a system with local Liouvillean to relax to the NESS, the typical
relaxation time is expected to grow only polynomially with the system
size, see~\cite{Znidaric2015}.

Similarly to~\cite{2016ZnidarichTargeting}, the spin-helix state
(\ref{XXZtargetedstate}) can be used to dissipatively generate an
arbitrary pure single-spin state on a remote location, arbitrarily far
from the place where the dissipation acts.  An experimental
verification of our results should be within the
reach~\footnote{A. F. Otte, private communication} of modern
experimental technics, which allow to build and probe, with
single-site resolution, arrays of magnetic atoms on a surface by
low-temperature scanning tunneling microscopy~\cite{2016OtteXXZ}.

\section{Conclusions}
To conclude, we have obtained a general criterion for a nonequilibrium
steady state to become a pure state of the form
$\ket{\psi_\mathrm{Zeno}}\otimes\ket{\psi_\mathrm{target}}$ in the
limit of strong dissipation, and calculated the relevant
characteristic dissipation strength.  The relative merit of our
approach is the absence of any strict requirements on the Lindblad
operators, besides the demand for locally targeting some pure state
$\ket{\psi_\mathrm{Zeno}}$. This is easily achieved by most commonly
implemented dissipative mechanisms like pumping, local dissipative
loss or particles from a trap. The key condition to obtain a desired
$\ket{\psi_\mathrm{target}}$ is then the restriction
(\ref{ConditionTargetPure}) imposed on the effective Hamiltonian. This
becomes a sufficient criterion provided the exceptions stated by
theorem \ref{theorem2} are circumvented.

\begin{acknowledgments}
  VP thanks the DFG for financial support. VP thanks G.~Sch\"utz and
  C. Kollath for useful discussions. We thank G. Jona-Lasinio for a
  critical reading.
\end{acknowledgments}

\appendix

\section{Proofs of theorems \ref{theorem1} and \ref{theorem2}}
\label{SuppMat}
We start with two lemmas, which are key for the following
demonstrations.

\begin{lemma}
  \label{lemma1}
  Let $\mathcal{D}_L$ be a Lindblad dissipator of the form
  $\mathcal{D}_L\rho = L \rho L^\dag- \frac{1}{2} \left( L^\dag L \rho
    + \rho L^\dag L \right)$ with $\rho$ density matrix operator in
  $\mathcal{H}$ and $L$ acting in the $d_0$-dimensional subspace
  $\mathcal{H}_0$ of the Hilbert space $\mathcal{H}$. If
  $\mathcal{D}_L$ has a nondegenerate pure kernel, i.e., there exists
  a unique pure state $\ket{\phi}$ such that $\mathcal{D}_L
  \ketbra{\phi}{\phi}=0$, then $L$ can be brought into a single Jordan
  block acting in $\mathcal{H}_0$, namely, $L=J_{0,d_0}$, where
  $\bra{e^i}J_{0,d_0}\ket{e^j}=\delta_{i,j-1}$, $i,j=0,1,\dots,d_0-1$.
\end{lemma}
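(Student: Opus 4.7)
The plan is to extract progressively stronger constraints on $L$ from the uniqueness of the pure fixed state of $\mathcal{D}_L$.

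First, I would sandwich $\mathcal{D}_L\ketbra{\phi}{\phi}=0$ between $\bra{\phi}$ and $\ket{\phi}$, obtaining $|\bra{\phi}L\ket{\phi}|^{2}=\|L\ket{\phi}\|^{2}$; saturation of Cauchy--Schwarz forces $L\ket{\phi}=c\ket{\phi}$ for some $c\in\mathbb{C}$. The substitution $L\to L-cI$, compensated by adding $\tfrac{i}{2}(\bar cL-cL^{\dagger})$ to $H$, preserves the kernel of $\mathcal{D}_L$, so I may take $c=0$ and identify $\ket{\phi}\equiv\ket{e^{0}}$.

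Next, I would show $\dim\ker L=1$. For any $\ket{v}\in\ker L$, $L\ketbra{v}{v}L^{\dagger}$ and $L^{\dagger}L\ket{v}$ both vanish, so $\mathcal{D}_L\ketbra{v}{v}=0$; uniqueness of the pure fixed state then yields $\ket{v}\propto\ket{\phi}$.

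Third, and most delicately, I would prove $L$ has no non-zero eigenvalue. If $L\ket{u}=\mu\ket{u}$ with $\mu\neq 0$, decomposing $L^{\dagger}\ket{u}=\bar\mu\ket{u}+\ket{\xi}$ with $\ket{\xi}\perp\ket{u}$ and imposing $\mathcal{D}_L\ketbra{u}{u}=0$ reduces to the cancellation $\tfrac{\mu}{2}\ket{\xi}\bra{u}+\tfrac{\bar\mu}{2}\ket{u}\bra{\xi}=0$; for $\mu\neq 0$ this forces $\ket{\xi}=0$, making $\ket{u}$ a common eigenvector of $L$ and $L^{\dagger}$ and $\ketbra{u}{u}$ a second pure fixed state --- a contradiction. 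The catch is that $\mathcal{D}_L\ketbra{u}{u}=0$ is not automatic from the hypothesis; to deploy the argument I would iterate the block decomposition from Step 1 on the $L$-invariant orthogonal complement $V=\ket{\phi}^{\perp}$, where the restricted dissipator $\mathcal{D}_{L|_{V}}$ inherits the unique-pure-fixed-state property, and cascade the reduction until no non-zero eigenvalue can survive.

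Combining these steps, $L$ is nilpotent of index $d_{0}$ with one-dimensional kernel, so its Jordan canonical form is a single block of size $d_{0}$. A basis in which $L=J_{0,d_{0}}$ is then built by choosing $\ket{e^{d_{0}-1}}\notin\ker L^{d_{0}-1}$ and iterating $\ket{e^{j-1}}:=L\ket{e^{j}}$; the standard Jordan-chain argument gives linear independence, the chain ends at $\ket{e^{0}}\propto\ket{\phi}$, and $\bra{e^{i}}L\ket{e^{j}}=\delta_{i,j-1}$ follows. I expect the bulk of the work to lie in Step 3, where setting up the induction so that the restricted dissipator retains a unique pure fixed point requires care; the remaining steps are routine Cauchy--Schwarz and Jordan-theory bookkeeping.
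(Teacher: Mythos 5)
Your Steps 1, 2 and 4 are correct and essentially reproduce the paper's own argument in more detail: the Cauchy--Schwarz saturation giving $L\ket{\phi}=c\ket{\phi}$, the repartitioning $L\to L-cI$ compensated inside $H$, the one-dimensionality of $\ker L$ from uniqueness of the pure dark state, and the Jordan-chain construction are all sound (the paper merely cites references for the first point and asserts the rest). The genuine gap is exactly where you suspect it, in Step 3, and your proposed repair does not close it. First, $V=\ket{\phi}^{\perp}$ is not $L$-invariant: since $L\ket{\phi}=0$, the span of $\ket{\phi}$ is invariant under $L$, hence $V$ is invariant under $L^{\dagger}$, but $\bra{\phi}L\ket{v}=\braket{L^{\dagger}\phi}{v}$ need not vanish for $\ket{v}\in V$, so there is no restricted dissipator $\mathcal{D}_{L|_{V}}$ to induct on. Second, even granting some invariant complement, the property you would have to propagate is not ``unique pure fixed state'' but ``no pure fixed state,'' and, as your own computation of $\tfrac{\mu}{2}\ket{\xi}\bra{u}+\tfrac{\bar\mu}{2}\ket{u}\bra{\xi}$ shows, the absence of pure fixed states is perfectly compatible with nonzero eigenvalues whenever $\ket{\xi}\neq 0$; the cascade therefore has nothing to bite on.

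In fact no argument can close this gap, because the statement is false as written: take $d_0=2$ and $L=\ketbra{e^0}{e^1}+\ketbra{e^1}{e^1}$. A direct computation gives $\mathcal{D}_L\rho=0$ if and only if $\rho\propto\ketbra{e^0}{e^0}$, so the kernel is nondegenerate and pure, yet $L$ has the two distinct eigenvalues $0$ and $1$, is diagonalizable, and cannot be brought to $J_{0,2}$ by any similarity transformation (nor by the allowed replacement $L\to aL+bI$). The paper's own proof hides the same hole behind the unsupported assertion that nondegeneracy of the kernel forbids any eigenvector of $L$ other than $\ket{\phi}$ --- precisely the implication you correctly flagged as non-automatic --- and behind the claim that every Jordan block of $L$ contributes a vector annihilated by $L$, which is likewise true only for blocks of eigenvalue zero. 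So your diagnosis of where the difficulty lies is sharper than the paper's, but neither your cascade nor the paper's one-line claim establishes Step 3; some additional hypothesis (for instance, that $L$ has purely nilpotent spectrum) is needed for the conclusion to hold.
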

\begin{proof}[Proof of lemma \ref{lemma1}]
  From the condition $\mathcal{D}_{L} | \phi\rangle \langle \phi|=0$,
  it follows that $ L \ket{\phi}= b \ket{\phi}$.  Without loss of
  generality, we can put $b=0$, i.e., $\ket{\phi}$ can always be made
  a dark state of $L$, by repartitioning the coherent and dissipative
  parts, $H$ and $L$, in the Lindblad master equation
  $\partial\rho/\partial t = -i\left[ H,\rho\right] +\Gamma
  \mathcal{D}_L \rho$, see \cite{Yamamoto05,ZollerPRA08}.  Because of
  the nondegeneracy assumption, no other eigenstates of $ L $ exist
  different from $| \phi\rangle$.  As a consequence, $L$ must be
  nondiagonalizable, meaning it is splitted into Jordan blocks by a
  similarity transformation, $V L V^{-1}= {\tilde L}$.  Each Jordan
  block separately would give rise to a state $| {\tilde \phi}\rangle$
  such that ${\tilde L} | {\tilde \phi}\rangle =0$, which would
  contradict the nondegeneracy assumption.  We conclude that there
  must be one Jordan block only. This is exactly the definition of
  $J_{0,d_0}$, up to an arbitrary scalar factor, ${\tilde L}= a
  J_{0,d_0}$. Without loss of generality, we set $L=J_{0,d_0}$, since
  the Hamiltonian is not defined yet and the coefficient $a$ is
  absorbed into the dissipative strength $\Gamma$.
\end{proof}

\begin{lemma}
  \label{lemma2}
  Let $\mathcal{D}_L$ be a Lindblad dissipator of the form specified
  in lemma~\ref{lemma1}
  and $X$ an operator acting in $\mathcal{H}$.  A necessary condition
  for $\mathcal{D}_L^{-1}X$ to exist is $\traccazero X=0$, where
  $\traccazero$ denotes the partial trace in $\mathcal{H}_0$.  If
  $L=J_{0,d_0}$, this condition is also sufficient.
\end{lemma}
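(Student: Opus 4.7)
My plan is to handle the two directions of the lemma separately. For necessity I use only that $L$ is supported in $\mathcal{H}_0$; for sufficiency I exploit the explicit form $L=J_{0,d_0}$ to reduce $\mathcal{D}_L Y=X$ to a triangular block system that I solve by recursion.

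For necessity, assume $X=\mathcal{D}_L Y$ and expand $Y=\sum_{ij}\ketbra{e^i}{e^j}\otimes Y_{ij}$ with $Y_{ij}$ acting on $\mathcal{H}_1$. Since $L$, $L^\dagger$ and $L^\dagger L$ all act as the identity on $\mathcal{H}_1$, a direct computation of the partial trace gives
\begin{align*}
  \traccazero(LYL^\dagger) = \sum_{ij}\bra{e^j}L^\dagger L\ket{e^i}\,Y_{ij}
  = \traccazero(L^\dagger L\, Y) = \traccazero(Y L^\dagger L),
\end{align*}
so the three terms making up $\mathcal{D}_L Y$ contribute identically under $\traccazero$. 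The factor $-\tfrac{1}{2}$ in front of the last two then forces $\traccazero \mathcal{D}_L Y \equiv 0$, hence $\traccazero X=0$ is necessary. Note that no specific structure of $L$ beyond its support in $\mathcal{H}_0$ was used.

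For sufficiency under $L = J_{0,d_0}$, I exploit $L\ket{e^j}=\ket{e^{j-1}}$ for $j\geq 1$, $L\ket{e^0}=0$, and $L^\dagger L = I_{\mathcal{H}_0} - \ketbra{e^0}{e^0}$. Expanding $X$ in the same block basis and matching coefficients in $\mathcal{D}_L Y = X$ yields
\begin{align*}
  \widehat{Y}_{i+1,j+1} - \tfrac{1}{2}(\chi_i + \chi_j)\, Y_{ij} = X_{ij}, \qquad \chi_k = 1-\delta_{k,0},
\end{align*}
where $\widehat{Y}_{i+1,j+1}$ equals $Y_{i+1,j+1}$ for $i,j\leq d_0-2$ and is zero otherwise. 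For the boundary blocks ($i=d_0-1$ or $j=d_0-1$) the first term is absent, so $Y_{ij}$ is determined outright; on each super-diagonal $j-i=k\neq 0$ one then recurses backward from that corner, and since $\chi_i+\chi_j>0$ away from $(0,0)$ every subsequent block is uniquely fixed.

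The only obstruction arises on the main diagonal $k=0$. Telescoping the recursion $Y_{i+1,i+1} = X_{ii} + \chi_i Y_{ii}$ from $i=0$ upward gives $Y_{d_0-1,d_0-1} = \sum_{k=0}^{d_0-2} X_{kk}$, while the boundary equation at $(d_0-1,d_0-1)$ demands $Y_{d_0-1,d_0-1}=-X_{d_0-1,d_0-1}$; compatibility is exactly $\traccazero X = 0$. Simultaneously $Y_{00}$ never enters any equation, mirroring the observation that $\ketbra{e^0}{e^0}\otimes A$ lies in $\ker\mathcal{D}_L$ for arbitrary $A$ on $\mathcal{H}_1$, so fixing $Y_{00}=0$ singles out a canonical inverse. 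The main obstacle will be the careful index bookkeeping needed to confirm that the sole compatibility condition sits on the $k=0$ diagonal and that the free parameter $Y_{00}$ captures the entire kernel; once this is established, the rest is elementary triangular linear algebra.
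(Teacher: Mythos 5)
Your proof is correct and follows essentially the same route as the paper: necessity via cyclicity of the trace (your block computation just spells out the paper's one-line argument), and sufficiency via the explicit action of $\mathcal{D}_L$ on the basis $\ketbra{e^i}{e^j}$ of $\mathcal{H}_0$, which is triangular along each diagonal $j-i=k$. The only presentational difference is that you solve the resulting system by backward recursion and telescoping, locating the single compatibility condition $\traccazero X=0$ on the main diagonal and the kernel in the free block $Y_{00}$, whereas the paper writes down the closed-form inverses directly.
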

\begin{proof}[Proof of lemma \ref{lemma2}]
  Suppose that $\mathcal{D}_L^{-1}X=Y$ exists.  Then $X =
  \mathcal{D}_L Y$ and, using the cyclic invariance of the trace, we
  have $\traccazero X = \traccazero\mathcal{D}_L Y = 0$.  Assume, now,
  that $L=J_{0,d_0}$.  Let us define a representation of a spin
  $s=(d_0-1)/2$ in the Hilbert space $\mathcal{H}_0$ of dimension
  $d_0$ and consider the basis $\ket{n}$, $n=0,1,\dots ,2s$, such that
  $J_{0,d_0} \ket{n} = \ket{n-1}$, $J_{0,d_0}\ket{0}=0$. Then, the
  dissipator ${\mathcal{D}}_{L}$ acts in the space spanned by
  $\ket{n}\bra{m}$ as
  \begin{widetext}
    \begin{align} \mathcal{D}_{L} \ket{n}\bra{m} = \left\{
        \begin{array}{ll}
          \ket{n-1}\bra{m-1}-\ket{n}\bra{m},\qquad &n>0,m>0, \\
          -\frac{1}{2} \ket{n}\bra{0}, & n>0,m=0, \\
          -\frac{1}{2} \ket{0}\bra{m}, & m>0,n=0, \\
          0, &n=m=0.
        \end{array}
      \right.
      \label{DissipatorSplusAction}
    \end{align}
    From Eq.~(\ref{DissipatorSplusAction}), it follows that for any
    matrix $X$ with zero trace, $\tr X=0$, we can construct
    $\mathcal{D}_{L}^{-1}X$. Indeed, since the dissipator
    $\mathcal{D}_{L}$ is a linear map, it is enough to show that every
    nondiagonal element of $X$ is invertible, and so is also the
    diagonal of $X$.  An arbitrary nondiagonal element
    $\ket{n}\bra{m}$, $n \neq m$, is readily inverted using
    Eq.~(\ref{DissipatorSplusAction}) as
    \begin{align}
      \mathcal{D}_{L}^{-1} \ket{n}\bra{m} = \left\{
        \begin{array}{ll}
          \sum_{k=0}^{n-1} \ket{n-k}\bra{m-k} -
          \frac{1}{2} \ket{0}\bra{m-n}, \qquad &
          n<m,
          \\
          \sum_{k=0}^{m-1} \ket{n-k}\bra{m-k} -
          \frac{1}{2} \ket{n-m}\bra{0}, &
          n>m.
        \end{array}
      \right.
      \label{DissipatorSplusInversion}
    \end{align}
    To invert $\mathrm{diag}(X)$, we introduce diagonal matrices
    $\phi(k)_{i j}=\delta_{ij} \theta(k-i)$, $i,j,k=1,2,\dots, 2 s$,
    where $\theta(n)$ is a discrete Heaviside step function. Then,
    \begin{align} \mathcal{D}_{L}^{-1}\mathrm{diag}(X) =&
      \sum_{k=1}^{2s} \tr \left(\phi(k) \mathrm{diag}(X) \right)
      \left( |k-1\rangle \langle k-1 | - \ketbra{0}{0}\right).
    \end{align}
  \end{widetext}
  We conclude that $\tr X=0$ is a sufficient condition to construct
  $\mathcal{D}_{L}^{-1}X$.
\end{proof}

\begin{proof}[Proof of theorem \ref{theorem1}]
  We start by assuming that there exists an expansion of the NESS in
  powers of $1/\Gamma$,
  \begin{align}
    \rho_{\rm{NESS}}(\Gamma) = \sum_{m=0}^\infty
    \frac{\rho^{(m)}}{\Gamma^m},
    \label{NessExpansion}
  \end{align}
  convergent for sufficiently large $\Gamma$, i.e.,
  $1/\Gamma<1/\Gamma_\mathrm{cr}$. We will always suppose to be inside
  the convergence disk.

  According to the hypothesis and the choice made for the basis in
  $\mathcal{H}_0$, we have as $\rho^{(0)}=\ket{e^0}\bra{e^0}\otimes
  \ket{\psi_{\rm{target}}}\bra{\psi_{\rm{target}}}$ with $\ket{e^0}
  \equiv \ket{\psi_\mathrm{Zeno}}$. Substituting the expansion
  (\ref{NessExpansion}) into the Lindblad master equation, and
  comparing order by order, we obtain: $\mathcal{D}_{L} \rho^{(0)}=0$
  and
  \begin{align}
    \rho^{(m+1)} = i \mathcal{D}_L^{-1}[H,\rho^{(m)}]+ M^{(m+1)},
    \label{RecurrenceRho}
  \end{align}
  where $M^{(m+1)} \in \ker {\cal{D}}_L$. Note that due to the
  Hermiticity of $\rho_{\rm{NESS}}$, and the property
  $\tr\rho_{\rm{NESS}}=1$, it follows $\left( M^{(m)} \right)^\dagger
  = M^{(m)}$ and $\tr M^{(m)}=0$ for all $m$.  For the expansion
  (\ref{NessExpansion}) to be consistent, the existence of an operator
  inverse in (\ref{RecurrenceRho}) is required at any order $m>0$.
  Lemma \ref{lemma1} allows us to assume ${L}= J_{0,d_0}$ so that,
  according to lemma \ref{lemma2}, the existence of the inverse
  $\mathcal{D}_L^{-1}$ at the $m$-th order is granted by
  \begin{align}
    \traccazero[H,\rho^{(m)}] =0, \qquad m=0,1,\dots,
    \label{Cond1}
  \end{align}
  where $\traccazero$ denotes the partial trace in Hilbert space
  $\mathcal{H}_0$.  In addition, we require a nontriviality condition,
  which guarantees that the NESS becomes a pure state \textit{only} in
  the Zeno limit, whereas it is a mixed state for any finite
  $\Gamma$. Such a condition can be written as
  \begin{align} [H,\rho^{(0)}] \neq 0,
    \label{CondNontriviality}
  \end{align}
  and amounts to have $\kappa\neq 0$ in
  Eq.~(\ref{ConditionTargetPure}).

  The existence of the expansion~(\ref{NessExpansion}) implies that
  the consistency conditions~(\ref{Cond1}) are to be satisfied at any
  order.  Using the decomposition $H =
  \sum_{j=0}^{d_0-1}\sum_{k=0}^{d_0-1} H_{jk}$, with $H_{jk} =
  |e^{j}\rangle \langle e^k | \otimes h_{jk}$, we have
  \begin{align} [H,\rho^{(0)}] &= [H,\ket{e^0}\bra{e^0} \otimes
    \ket{\psi_{\rm{target}}}\bra{\psi_{\rm{target}}}] \nonumber \\ &=
    \sum_{k=1}^{d_0-1}\left( H_{k0} \rho^{(0)} -\rho^{(0)} H_{0k}
    \right),
    \label{defQ1}
  \end{align}
  so that the condition (\ref{Cond1}) for $m=0$ gives
  \begin{align}
    \traccazero [H,\rho^{(0)}]=
    [h_{00},\ket{\psi_{\rm{target}}}\bra{\psi_{\rm{target}}} ]=0,
  \end{align}
  entailing
  \begin{align}
    h_{00}\ket{\psi_{\rm{target}}} = \lambda \ket{\psi_{\rm{target}}}.
    \label{Conditionh00}
  \end{align}
  Thus $\ket{\psi_{\rm{target}}}$ must be an eigenstate of $h_{00}$,
  which we choose to identify as the eigenstate $\alpha=0$, namely,
  $\ket{\psi_{\rm{target}}} \equiv \ket{0}$ and $\lambda \equiv
  \lambda_0$.

  At the next order of the $1/\Gamma$ power expansion, $\rho^{(1)}$ is
  constructed applying $\mathcal{D}_L^{-1}$ to the commutator
  (\ref{defQ1}). To proceed, we note that for any $k>0$ and for any
  $\alpha,\beta$ we have $\mathcal{D}_L \eket{0}{\alpha}
  \ebra{k}{\beta} =-\frac{1}{2} \eket{0}{\alpha} \ebra{k}{\beta}$, and
  $\mathcal{D}_L \eket{k}{\alpha} \ebra{0}{\beta} =-\frac{1}{2}
  \eket{k}{\alpha} \ebra{0}{\beta}$, where we introduced the notation
  $\eket{j}{\alpha}=\ket{e^j}\otimes\ket{\alpha}$.  We obtain
  \begin{align}
    \rho^{(1)}= M^{(1)} -2 i \sum_{k=1}^{d_0-1}
    \left(H_{k0}\rho^{(0)}-\rho^{(0)}H_{0k} \right).
    \label{def_ro1}
  \end{align}
  Because $M^{(1)}$ is an element of the kernel of $\mathcal{D}_L$, we
  rewrite it as $M^{(1)}= \sum_{\alpha,\beta} M^{(1)}_{\alpha,\beta}
  \eket{0}{\alpha} \ebra{0}{\beta}$, with $M^{(1)}_{\alpha,\beta}$
  unknown coefficients. By plugging expression~(\ref{def_ro1}) into
  Eq.~(\ref{Cond1}) for $m=1$, we get
  \begin{align}
    \traccazero[H,i \rho^{(1)}] &= \traccazero\left( -4
      \sum_{k=1}^{d_0-1} \mathcal{D}_{H_{k0}} \rho^{(0)} +
      [H_{00},M^{(1)}] \right) \nonumber \\ &= 0.
    \label{ResCond1}
  \end{align}
  In the above expression,
  $\mathcal{D}_{H_{k0}}$ is a dissipator of the form
  (\ref{DefDissipator}) with $L\to H_{k0}$, and we used $
  h_{jk}^\dagger=h_{kj}$.  Equation~(\ref{ResCond1}) provides
  ${d_1}^2$ scalar equations $\bra{\alpha} \traccazero[H,i \rho^{(1)}
  ] \ket{\beta}=0$, $\alpha,\beta=0,1,\dots,d_1-1$. In particular, for
  $\alpha=\beta=0$ we obtain, after some algebra,
  \begin{align}
    \bra{0} \traccazero[H,i \rho^{(1)}] \ket{0} =
    \sum_{k=1}^{d_0-1}\sum_{\gamma=1}^{d_1-1}
    |\bra{0}h_{0k}\ket{\gamma}|^2=0,
    \label{ResCond1_00}
  \end{align}
  leading to $\bra{\gamma}h_{j0}\ket{0} =\kappa_j \delta_{\gamma,0}$
  for any $j>0$.  Without loss of generality, we can choose the
  vectors $\ket{e^j}$, with $j>0$, of the orthonormal basis in
  $\mathcal{H}_0$ in such a way that $\kappa_1 \equiv \kappa \neq 0$
  and $\kappa_j =0$ for $j>1$. In this basis, we have
  \begin{align}
    \bra{\gamma}h_{k0}\ket{0} = \lambda \delta_{k,0}
    \delta_{\gamma,0}+ \kappa\delta_{k,1} \delta_{\gamma,0},
    \label{ConditionAllCzero}
  \end{align}
  equivalent to Eq.~(\ref{ConditionTargetPure}) with $\ket{e^1} \equiv
  \ket{\psi_{\rm{Zeno}}^\perp}$.
\end{proof}

\begin{proof}[Proof of theorem \ref{theorem2}]
  According to the hypothesis, and with the identifications $\ket{e^0}
  \equiv \ket{\psi_{\rm{Zeno}}}$, $\ket{e^1} \equiv
  \ket{\psi_{\rm{Zeno}}^\perp}$, $\ket{0} \equiv
  \ket{\psi_{\rm{target}}}$ and $\lambda_0 \equiv \lambda$, the
  Hamiltonian $H$ is such that $h_{00} \ket{0} = \lambda_0 \ket{0}$
  and $h_{10} \ket{0} = \kappa \ket{0}$, $\kappa \neq 0$.

  If the consistency conditions (\ref{Cond1}) were satisfied at any
  order $m$, the $1/\Gamma$ power expansion (\ref{NessExpansion})
  would allow us to state that $\lim_{\Gamma\to\infty}
  \rho_{\rm{NESS}}(\Gamma) = \rho^{(0)}$.  Supposing for a moment that
  this is the case, and observing that $\rho^{(0)}= \ket{e^0}\bra{e^0}
  \otimes \ket{0}\bra{0}$ satisfies Eq.~(\ref{Conditionh00}), in
  virtue of the unicity of the NESS, we achieve the thesis
  $\lim_{\Gamma\to\infty} \rho_{\rm{NESS}}(\Gamma) =
  \ket{\Psi}\bra{\Psi}$, with $\ket{\Psi } = \ket{\psi_{\rm{Zeno}}}
  \otimes \ket{\psi_{\rm{target}}}$.  Therefore, it remains to
  demonstrate that the consistency conditions (\ref{Cond1}) are
  satisfied for any $m>0$.

  Condition (\ref{Cond1}) for $m=1$ is given by Eq.~(\ref{ResCond1})
  and is equivalent to ${d_1}^2$ scalar equations obtained by taking
  the expectation of (\ref{ResCond1}) between any two states
  $\bra{\alpha}$ and $\ket{\beta}$.  The scalar equation obtained for
  $\alpha=\beta=0$, namely, Eq.~(\ref{ResCond1_00}), is now satisfied
  by hypothesis (\ref{ConditionTargetPure}).  We split the remaining
  ${d_1}^2-1$ equations in in three parts:
  \begin{align}
    &i [h_{00},m^{(1)}\Phi]= -2 h_{01} h_{10} \Phi
    + 2 |\kappa|^2 \Phi, \label{C1-1} \\
    &i [h_{00},\Phi m^{(1)}]= -2 \Phi h_{01} h_{10}
    + 2 |\kappa|^2 \Phi, \label{C1-2} \\
    & [h_{00},m^{(1)}_V]=0, \label{C1-3}
  \end{align}
  where we have introduced
  \begin{align}
    & m^{(1)}=\traccazero M^{(1)}=
    \sum_{\alpha=0}^{d_1-1}\sum_{\beta=0}^{d_1-1} M^{(1)}_{\alpha
      \beta} \ket{\alpha}\bra{\beta}, \\
    &m^{(1)}_V=\sum_{\alpha=1}^{d_1-1}\sum_{\beta=1}^{d_1-1}
    M^{(1)}_{\alpha \beta} \ket{\alpha}\bra{\beta}, \\
    &\Phi= \traccazero \rho^{(0)}=\ket{0}\bra{0}.
  \end{align}

  Equation~(\ref{C1-1}) is a closed set of linear equations for the
  $d_1-1$ unknowns $M^{(1)}_{\alpha 0}$, $\alpha>0$, which, using
  (\ref{ConditionAllCzero}), can be written as
  \begin{align}
    \sum_{\ga=1}^{d_1-1} \left( \bra{\alpha}h_{00}\ket{\gamma} -
      \lambda_0 \delta_{\alpha,\gamma} \right) M^{(1)}_{\ga 0} = 2 i
    \kappa \bra{\alpha}h_{01}\ket{0}.
    \label{LinearSystemForM1Phi}
  \end{align}
  The unknowns $M^{(1)}_{\alpha 0}$ are uniquely determined from
  (\ref{LinearSystemForM1Phi}) if $\det{\parallel
    \bra{\alpha}h_{00}\ket{\gamma} - \lambda_0
    \delta_{\alpha,\gamma} \parallel} \neq 0$, or, equivalently, if
  the eigenvalue $\lambda_0$ of the $d_1\times d_1$ matrix $h_{00}$ is
  nondegenerate. In the basis where $h_{00}$ is diagonal,
  $\bra{\alpha}h_{00}\ket{\gamma} = \delta_{\alpha,\gamma}
  \lambda_\alpha$, so we obtain $M^{(1)}_{\alpha 0}= 2 i \ka \bra{\al}
  h_{01} \ket{0}/(\la_\al-\la_0)$, or, in matrix form,
  \begin{align}
    m^{(1)} \Phi - M^{(1)}_{0 0} \Phi = 2 i \Lambda h_{01} h_{10}
    \Phi,
    \label{ResM1PhiWithLambda}
  \end{align}
  where $\Lambda$ is given by Eq.~(\ref{DefLambda}).

  Equation~(\ref{C1-2}) is the Hermitian conjugate of Eq.~(\ref{C1-1})
  and leads to $M^{(1)}_{0 \alpha }= \overline{M^{(1)}_{\alpha 0}}$.

  Equation~(\ref{C1-3}) is a set of $(d_1-1)^2$ equations that do not
  determine $m^{(1)}_V$ completely but restrict it to have common
  eigenvectors with $h_{00}$. If the spectrum of $h_{00}$ is
  nondegenerate, then $m^{(1)}_V$ is automatically diagonal in the
  basis $\ket{\al}$; otherwise, it can be made diagonal,
  \begin{align}
    &m^{(1)}_V= \sum_{\alpha=1}^{d_1-1} \mu_\alpha \ketbra{\al}{\al},
    \label{Eigenbasis_m1}\\
    &h_{00}=\sum_{\alpha=0}^{d_1-1} \la_\alpha \ketbra{\al}{\al},
  \end{align}
  where all $\lambda_\alpha,\mu_\alpha$ are real.  To determine
  $\mu_\alpha$, additional relations are needed which come from
  Eq.~(\ref{Cond1}) for $m=2$. Calculating $\rho^{(2)}=i
  \mathcal{D}_L^{-1}[H,\rho^{(1)}]+M^{(2)}$, with $M^{(2)}=
  \sum_{\alpha \beta} M^{(2)}_{\alpha \beta} \eket{0}{\alpha}
  \ebra{0}{\beta}$, we obtain
  \begin{align}
    \rho^{(2)} =\ & M^{(2)} - 2 i [H,M^{(1)}]+ 8 \mathcal{D}_H
    \rho^{(0)} \nonumber \\ & + 4 \left( \ka^2\rho^{(0)} - H_{10}\Om
      H_{01} \right)\label{Res_ro2}.
  \end{align}
  We then split the set of ${d_1}^2$ equations (\ref{Cond1}) for $m=2$
  into five parts:
  \begin{align}
    0 &= \traccazero [H, \rho^{(2)}] \nonumber \\ &=
    \sum_{\alpha=0}^{d_1-1}\sum_{\beta=0}^{d_1-1} Z_{\al\be}
    \ketbra{\al}{\be} \nonumber \\ &= Z_{0 0} \ketbra{0}{0}
    +\sum_{\al=1}^{d_1-1} Z_{\al 0} \ketbra{\al}{0}
    +\sum_{\al=1}^{d_1-1} Z_{0 \al} \ketbra{0}{\al} \nonumber \\
    &\quad + \sum_{\al=1}^{d_1-1} \sum_{\be=1}^{d_1-1}
    (1-\delta_{\al,\be}) Z_{\al\be} \ketbra{\al}{\be}
    +\sum_{\al=1}^{d_1-1} Z_{\al\al} \ketbra{\al}{\al}.
    \label{Zsplitting}
  \end{align}
  First, consider the $d_1-1$ equations $Z_{\al\al}=0$. After some
  algebra, using (\ref{Eigenbasis_m1}) we obtain
  \begin{align}
    & Z_{\al\al}= 4 i \bra{\al} \sum_{k=1}^{d_0-1}
    \mathcal{D}_{h_{k0}} m^{(1)}_V \ket{\al} = \sum_{\be=1}^{d_1-1}
    K_{\al \be} \mu_\be=0.
    \label{Eq Cond2diag}
  \end{align}
  where $K_{\al \be}$ are given by~(\ref{DefK}). Now, if $\det K\neq
  0$, then all $\mu_\be=0$, and therefore $m^{(1)}_V=0$.

  Consider now $Z_{00}=0$. Using $Z_{00}\Phi = \Phi \traccazero
  [H,\rho^{(2)}] \Phi$, after some algebra we obtain
  \begin{align}
    Z_{00}\Phi &= - 2i \Phi \traccazero [H,[H,M^{(1)}]] \Phi \nonumber
    \\ &= 2 i \left( \Phi m^{(1)} h_{01}h_{10} \Phi + \Phi
      h_{01}h_{10} m^{(1)} \Phi \right).
    \label{Eq Z00}
  \end{align}
  Substituting $h_{01}h_{10} \Phi$ and $\Phi h_{01}h_{10}$ from
  Eqs.~(\ref{C1-1}) and (\ref{C1-2}), we find $Z_{00}\Phi=4 i |\ka|^2
  \Phi m^{(1)} \Phi =0$, so that
  \begin{align}
    M^{(1)}_{00} =0.
    \label{Eq M1_00=0}
  \end{align}
  The matrix $M^{(1)}$ is thus completely determined
  \begin{align}
    M^{(1)} &= \ketbra{e^0}{e^0}\otimes \left( m^{(1)} \Phi + \Phi
      m^{(1)} \right) \nonumber \\ &= 2 i \ketbra{e^0}{e^0}\otimes
    \left( \Lambda h_{01}h_{10} \Phi - \Phi h_{01}h_{10} \Lambda
    \right).
    \label{ResultM1}
  \end{align}

  The set of equations $Z_{\al\be}=0$, for $\al,\be>0$ with $\alpha
  \neq \beta$, after some algebra is brought in the form
  \begin{align}
    \bra{\al} [h_{00},{\tilde{m}}^{(2)}_V] \ket{\be}=0,
  \end{align}
  where
  \begin{align}
    \tilde{m}^{(2)}_V= m^{(2)}_V- m^{(1)}\Phi m^{(1)},
    \label{DefM2tilde}
  \end{align}
  \begin{align}
    & m^{(2)}=\traccazero M^{(2)}=
    \sum_{\alpha=0}^{d_1-1}\sum_{\beta=0}^{d_1-1} M^{(2)}_{\alpha
      \beta} \ket{\alpha}\bra{\beta}, \\
    &m^{(2)}_V=\sum_{\alpha=1}^{d_1-1}\sum_{\beta=1}^{d_1-1}
    M^{(2)}_{\alpha \beta} \ket{\alpha}\bra{\beta},
  \end{align}
  and it entails the diagonal nature of the matrix $\tilde{m}^{(2)}_V$
  in the eigenbasis of $h_{00}$,
  \begin{align}
    {\tilde{m}}^{(2)}_V = \sum_{\alpha=1}^{d_1-1} q_\alpha
    \ketbra{\al}{\al}.
    \label{Eigenbasis_m2tilde}
  \end{align}

  The equations $Z_{\al 0}=0$ and $Z_{0 \al}=0$, $\alpha >0$,
  determine the first column and the first row of the matrix
  $M^{(2)}$, except for the element $M^{(2)}_{00}$.  Using
  Eqs. (\ref{ConditionAllCzero}), (\ref{C1-1}) and (\ref{C1-2}), these
  equations can be written in the form
  \begin{align}
    [h_{00},m^{(2)}- 2 \lambda_0 m^{(1)}]\Phi = Q \Phi- \Phi Q
    \Phi, \label{Eq Cond2adiag}
  \end{align}
  where
  \begin{align}
    Q= - 4 i \sum_{k=1}^{d_0-1} \mathcal{D}_{h_{k0}} m^{(1)} \Phi - 8
    |\ka|^2 h_{11} \Phi + 4 \ka \sum_{k=1}^{d_0-1} h_{0k} h_{k1} \Phi,
    \nonumber
  \end{align}
  and the Hermitian conjugate equation for $\Phi m^{(2)}$. The linear
  problem (\ref{Eq Cond2adiag}) is readily solvable as
  \begin{align}
    \left( m^{(2)}- 2 \lambda_0 m^{(1)}\right) \Phi - M^{(2)}_{00}
    \Phi = \Lambda Q \Phi. \label{Res Cond2adiag}
  \end{align}

  To completely determine the matrix $M^{(2)}=\ketbra{e^0}{e^0}\otimes
  m^{(2)}$, with
  \begin{align}
    &m^{(2)}=\Phi m^{(2)}+ m^{(2)} \Phi + m^{(2)}_V- M^{(2)}_{00}
    \Phi, \label{Def M2}
  \end{align}
  we need the diagonal entries of $m^{(2)}_V$, which, according to
  Eq.~(\ref{Eigenbasis_m2tilde}), are given by $ M^{(2)}_{\al \al}=
  q_\alpha + M^{(1)}_{\al 0} M^{(1)}_{0 \al}$, $\al>0$.  To find $
  q_\alpha$ it is necessary to go to the third order of the expansion
  and write down the set of equations $\bra{\al} \traccazero [ H,
  \rho^{(3)} ] \ket{\al}=0$.  After quite tedious but straightforward
  calculations, using Eq.~(\ref{Eq Cond2adiag}), we obtain a closed
  set of equations for $ q_\be$ of the form
  \begin{align}
    \sum_{\be=0}^{d_1-1} K_{\al \be} q_\be + \bra{\al} \mathcal{F
    }\Phi \mathcal{F }^\dagger \ket{\al} =0, \label{Eq Cond3diag}
  \end{align}
  where
  \begin{align}
    \mathcal{F } = \sum_{k=1}^{d_0-1}\left( 2 i h_{k1} h_{10} + [
      m^{(1)},h_{k0} ] \right) \nonumber
  \end{align}
  and $K_{\al \be}$ is given by Eq.~(\ref{DefK}). The solution of the
  linear problem (\ref{Eq Cond3diag}) exists if $\det K \neq
  0$. Finally, $M^{(2)}_{00}$ is found from the requirement $\tr
  M^{(2)}=0$. At this point, the first two orders $\rho^{(1)}$ and
  $\rho^{(2)}$ of the NESS expansion~(\ref{NessExpansion}) are
  completely determined from Eqs.~(\ref{def_ro1}), (\ref{Res_ro2}),
  (\ref{ResultM1}) and (\ref{Def M2}).

  We remark that all systems of equations obtained until now, arising
  in the first three orders of the NESS expansion in powers of
  $1/\Gamma$, Eqs.~(\ref{C1-1}), (\ref{C1-2}), (\ref{Eq Cond2adiag})
  and Eqs.~(\ref{Eq Cond2diag}), (\ref{Eq Cond3diag}), are governed by
  two $(d_1-1)\times (d_1-1)$ matrices: the matrix $\tilde h_{00}$
  with elements $(h_{00})_{\al\be} - \lambda_0 \delta_{\alpha,\beta}$,
  and the matrix $K$ given by Eq.~(\ref{DefK}).  The simultaneous
  invertibility of $\tilde h_{00}$ and $K$, namely,
  \begin{align}
    & (\det K) (\det {\tilde h}_{00}) \neq 0, \label{Cond
      Invertibility}
  \end{align}
  guarantees the existence and uniqueness of $\rho^{(1)}$ and
  $\rho^{(2)}$, $\rho^{(0)}$ being fixed by the targeting condition
  (\ref{ConditionTargetPure}).  We checked that no further
  restrictions arise in the higher orders of the $1/\Gamma$ expansion.
  Thus, the condition (\ref{Cond Invertibility}), together with the
  main condition (\ref{ConditionTargetPure}) form a complete set of
  sufficient conditions (apart from NESS uniqueness) for the NESS to
  approach the pure target state in the Zeno limit.

  Finally, to estimate the convergence of $\rho_{\rm{NESS}}(\Gamma)$
  to the pure target state, we investigate the purity $1-
  \tr\rho^2_\mathrm{NESS}(\Gamma) = {\Gamma_\mathrm{ch}}^2/\Gamma^2+
  O(\Gamma^{-3})$.  A somewhat lengthy but straightforward calculation
  yields
  \begin{align}
    {\Gamma_\mathrm{ch}}^2 = -2 \sum_{\alpha=1}^{d_0-1} \left|
      M^{(1)}_{0\alpha} \right|^2 - 2 M^{(2)}_{00}.
    \label{PurenessCharacteristicGamma}
  \end{align}
  From Eq.~(\ref{Eigenbasis_m2tilde}), using $M^{(2)}_{00}=-
  \sum_{\al>0} M^{(2)}_{\al\al}$, we readily find
  ${\Gamma_\mathrm{ch}}^2 = \sum_{\al>0} (M^{(2)}_V)_{\al \al} =
  \sum_{\al>0} q_\al$, where $q_\al$ are determined by Eq.~(\ref{Eq
    Cond3diag}). Multiplying (\ref{Eq Cond3diag}) by $K^{-1}$ and
  using (\ref{ResM1PhiWithLambda}), we obtain
  Eq.~(\ref{ResGammaCharacteristic}).
\end{proof}


Some remarks are in order about the invertibility of the matrix $K$.
According to Eq.~(\ref{DefK}), $K$ is a real matrix. It has
nonnegative off-diagonal elements and its diagonal elements satisfy
$K_{\al\al}+ \sum_{\be \neq \al} K_{\al\be} = -d_\al$, where
$d_\al=\sum_{k>0} |\bra{0} h_{k0}\ket{\al}|^2$.  By the Gershgorin
circle theorem, every eigenvalue of $K$ lies within at least one of
the disks on the complex plane centered at $K_{\al\al}$ and with
radius $\sum_{\al \neq \be} K_{\al\be}$. All the disks lie completely
in the negative half-plane, and some of them (those for which $d_\al
=0$) touch the origin.  If $d_\al >0$ for all $\al>0$, all the
eigenvalues of $K$ are strictly negative and $K$ is invertible. If
$d_\al=0$ for any $\al>0$, then $K$ becomes a stochastic matrix and
has an eigenvalue $0$, i.e., it is not invertible. If only some
$d_\al=0$, then, by the Gershgorin theorem, an eigenvalue $0$ is not
excluded. Calculating $\det K$ via the Leibnitz formula, we find that
for $\det K=0$, all Leibnitz terms must vanish separately. This
allows, in principle, for a complete classification of the non
invertibility points of $K$.


\begin{thebibliography}{49}%
  \makeatletter \providecommand \@ifxundefined [1]{%
    \@ifx{#1\undefined} }%
  \providecommand \@ifnum [1]{%
    \ifnum #1\expandafter \@firstoftwo \else \expandafter
    \@secondoftwo \fi }%
  \providecommand \@ifx [1]{%
    \ifx #1\expandafter \@firstoftwo \else \expandafter \@secondoftwo
    \fi }%
  \providecommand \natexlab [1]{#1}%
  \providecommand \enquote [1]{``#1''}%
  \providecommand \bibnamefont [1]{#1}%
  \providecommand \bibfnamefont [1]{#1}%
  \providecommand \citenamefont [1]{#1}%
  \providecommand \href@noop [0]{\@secondoftwo}%
  \providecommand \href [0]{\begingroup \@sanitize@url \@href}%
  \providecommand \@href[1]{\@@startlink{#1}\@@href}%
  \providecommand \@@href[1]{\endgroup#1\@@endlink}%
  \providecommand \@sanitize@url [0]{\catcode `\\12\catcode
    `\$12\catcode `\&12\catcode `\#12\catcode `\^12\catcode
    `\_12\catcode `\%12\relax}%
  \providecommand \@@startlink[1]{}%
  \providecommand \@@endlink[0]{}%
  \providecommand \url [0]{\begingroup\@sanitize@url \@url }%
  \providecommand \@url [1]{\endgroup\@href {#1}{\urlprefix }}%
  \providecommand \urlprefix [0]{URL }%
  \providecommand \Eprint [0]{\href }%
  \providecommand \doibase [0]{http://dx.doi.org/}%
  \providecommand \selectlanguage [0]{\@gobble}%
  \providecommand \bibinfo [0]{\@secondoftwo}%
  \providecommand \bibfield [0]{\@secondoftwo}%
  \providecommand \translation [1]{[#1]}%
  \providecommand \BibitemOpen [0]{}%
  \providecommand \bibitemStop [0]{}%
  \providecommand \bibitemNoStop [0]{.\EOS\space}%
  \providecommand \EOS [0]{\spacefactor3000\relax}%
  \providecommand \BibitemShut [1]{\csname bibitem#1\endcsname}%
  \let\auto@bib@innerbib\@empty
\bibitem [{\citenamefont {Shankar}\ \emph {et~al.}(2013)\citenamefont
    {Shankar}, \citenamefont {Hatridge}, \citenamefont {Leghtas},
    \citenamefont {Sliwa}, \citenamefont {Narla}, \citenamefont
    {Vool}, \citenamefont {Girvin}, \citenamefont {Frunzio},
    \citenamefont {Mirrahimi},\ and\ \citenamefont
    {Devoret}}]{ShankarNature2013Bell}%
  \BibitemOpen \bibfield {author} {\bibinfo {author} {\bibfnamefont
      {S.}~\bibnamefont {Shankar}}, \bibinfo {author} {\bibfnamefont
      {M.}~\bibnamefont {Hatridge}}, \bibinfo {author} {\bibfnamefont
      {Z.}~\bibnamefont {Leghtas}}, \bibinfo {author} {\bibfnamefont
      {K.~M.}\ \bibnamefont {Sliwa}}, \bibinfo {author} {\bibfnamefont
      {A.}~\bibnamefont {Narla}}, \bibinfo {author} {\bibfnamefont
      {U.}~\bibnamefont {Vool}}, \bibinfo {author} {\bibfnamefont
      {S.~M.}\ \bibnamefont {Girvin}}, \bibinfo {author}
    {\bibfnamefont {L.}~\bibnamefont {Frunzio}}, \bibinfo {author}
    {\bibfnamefont {M.}~\bibnamefont {Mirrahimi}}, \ and\ \bibinfo
    {author} {\bibfnamefont {M.~H.}\ \bibnamefont {Devoret}},\
  }\bibfield {title} {\enquote {\bibinfo {title} {Autonomously
        stabilized entanglement between two superconducting quantum
        bits},}\ }\href@noop {} {\bibfield {journal} {\bibinfo
      {journal} {Nature}\ }\textbf {\bibinfo {volume} {504}},\
    \bibinfo {pages} {419--422} (\bibinfo {year} {2013})},\ \bibinfo
  {note} {letter}\BibitemShut {NoStop}%
\bibitem [{\citenamefont {Lin}\ \emph {et~al.}(2013)\citenamefont
    {Lin}, \citenamefont {Gaebler}, \citenamefont {Reiter},
    \citenamefont {Tan}, \citenamefont {Bowler}, \citenamefont
    {Sorensen}, \citenamefont {Leibfried},\ and\ \citenamefont
    {Wineland}}]{LinNature2013Bell}%
  \BibitemOpen \bibfield {author} {\bibinfo {author} {\bibfnamefont
      {Y.}~\bibnamefont {Lin}}, \bibinfo {author} {\bibfnamefont
      {J.~P.}\ \bibnamefont {Gaebler}}, \bibinfo {author}
    {\bibfnamefont {F.}~\bibnamefont {Reiter}}, \bibinfo {author}
    {\bibfnamefont {T.~R.}\ \bibnamefont {Tan}}, \bibinfo {author}
    {\bibfnamefont {R.}~\bibnamefont {Bowler}}, \bibinfo {author}
    {\bibfnamefont {A.~S.}\ \bibnamefont {Sorensen}}, \bibinfo
    {author} {\bibfnamefont {D.}~\bibnamefont {Leibfried}}, \ and\
    \bibinfo {author} {\bibfnamefont {D.~J.}\ \bibnamefont
      {Wineland}},\ }\bibfield {title} {\enquote {\bibinfo {title}
      {Dissipative production of a maximally entangled steady state of
        two quantum bits},}\ }\href@noop {} {\bibfield {journal}
    {\bibinfo {journal} {Nature}\ }\textbf {\bibinfo {volume} {504}},\
    \bibinfo {pages} {415--418} (\bibinfo {year} {2013})},\ \bibinfo
  {note} {letter}\BibitemShut {NoStop}%
\bibitem [{\citenamefont {H{\"a}ffner}\ \emph
    {et~al.}(2008)\citenamefont {H{\"a}ffner}, \citenamefont {Roos},\
    and\ \citenamefont {Blatt}}]{BlattPhysReports08}%
  \BibitemOpen \bibfield {author} {\bibinfo {author} {\bibfnamefont
      {H.}~\bibnamefont {H{\"a}ffner}}, \bibinfo {author}
    {\bibfnamefont {C.~F.}\ \bibnamefont {Roos}}, \ and\ \bibinfo
    {author} {\bibfnamefont {R.}~\bibnamefont {Blatt}},\ }\bibfield
  {title} {\enquote {\bibinfo {title} {Quantum computing with trapped
        ions},}\ }\href {\doibase
    http://dx.doi.org/10.1016/j.physrep.2008.09.003} {\bibfield
    {journal} {\bibinfo {journal} {Physics Reports}\ }\textbf
    {\bibinfo {volume} {469}},\ \bibinfo {pages} {155 -- 203}
    (\bibinfo {year} {2008})}\BibitemShut {NoStop}%
\bibitem [{\citenamefont {Schindler}\ \emph
    {et~al.}(2013)\citenamefont {Schindler}, \citenamefont {Muller},
    \citenamefont {Nigg}, \citenamefont {Barreiro}, \citenamefont
    {Martinez}, \citenamefont {Hennrich}, \citenamefont {Monz},
    \citenamefont {Diehl}, \citenamefont {Zoller},\ and\ \citenamefont
    {Blatt}}]{BlattNature2013}%
  \BibitemOpen \bibfield {author} {\bibinfo {author} {\bibfnamefont
      {P.}~\bibnamefont {Schindler}}, \bibinfo {author} {\bibfnamefont
      {M.}~\bibnamefont {Muller}}, \bibinfo {author} {\bibfnamefont
      {D.}~\bibnamefont {Nigg}}, \bibinfo {author} {\bibfnamefont
      {J.~T.}\ \bibnamefont {Barreiro}}, \bibinfo {author}
    {\bibfnamefont {E.~A.}\ \bibnamefont {Martinez}}, \bibinfo
    {author} {\bibfnamefont {M.}~\bibnamefont {Hennrich}}, \bibinfo
    {author} {\bibfnamefont {T.}~\bibnamefont {Monz}}, \bibinfo
    {author} {\bibfnamefont {S.}~\bibnamefont {Diehl}}, \bibinfo
    {author} {\bibfnamefont {P.}~\bibnamefont {Zoller}}, \ and\
    \bibinfo {author} {\bibfnamefont {R.}~\bibnamefont {Blatt}},\
  }\bibfield {title} {\enquote {\bibinfo {title} {Quantum simulation
        of dynamical maps with trapped ions},}\ }\href@noop {}
  {\bibfield {journal} {\bibinfo {journal} {Nat Phys}\ }\textbf
    {\bibinfo {volume} {9}},\ \bibinfo {pages} {361--367} (\bibinfo
    {year} {2013})},\ \bibinfo {note} {article}\BibitemShut {NoStop}%
\bibitem [{\citenamefont {Kienzler}\ \emph {et~al.}(2015)\citenamefont
    {Kienzler}, \citenamefont {Lo}, \citenamefont {Keitch},
    \citenamefont {de~Clercq}, \citenamefont {Leupold}, \citenamefont
    {Lindenfelser}, \citenamefont {Marinelli}, \citenamefont
    {Negnevitsky},\ and\ \citenamefont {Home}}]{Kienzler2015Science}%
  \BibitemOpen \bibfield {author} {\bibinfo {author} {\bibfnamefont
      {D.}~\bibnamefont {Kienzler}}, \bibinfo {author} {\bibfnamefont
      {H.-Y.}\ \bibnamefont {Lo}}, \bibinfo {author} {\bibfnamefont
      {B.}~\bibnamefont {Keitch}}, \bibinfo {author} {\bibfnamefont
      {L.}~\bibnamefont {de~Clercq}}, \bibinfo {author} {\bibfnamefont
      {F.}~\bibnamefont {Leupold}}, \bibinfo {author} {\bibfnamefont
      {F.}~\bibnamefont {Lindenfelser}}, \bibinfo {author}
    {\bibfnamefont {M.}~\bibnamefont {Marinelli}}, \bibinfo {author}
    {\bibfnamefont {V.}~\bibnamefont {Negnevitsky}}, \ and\ \bibinfo
    {author} {\bibfnamefont {J.~P.}\ \bibnamefont {Home}},\ }\bibfield
  {title} {\enquote {\bibinfo {title} {Quantum harmonic oscillator
        state synthesis by reservoir engineering},}\ }\href {\doibase
    10.1126/science.1261033} {\bibfield {journal} {\bibinfo {journal}
      {Science}\ }\textbf {\bibinfo {volume} {347}},\ \bibinfo {pages}
    {53--56} (\bibinfo {year} {2015})}\BibitemShut {NoStop}%
\bibitem [{\citenamefont {Kastoryano}\ \emph
    {et~al.}(2011)\citenamefont {Kastoryano}, \citenamefont {Reiter},\
    and\ \citenamefont {S\o{}rensen}}]{Kastoryano2011PRL}%
  \BibitemOpen \bibfield {author} {\bibinfo {author} {\bibfnamefont
      {M.~J.}\ \bibnamefont {Kastoryano}}, \bibinfo {author}
    {\bibfnamefont {F.}~\bibnamefont {Reiter}}, \ and\ \bibinfo
    {author} {\bibfnamefont {A.~S.}\ \bibnamefont {S\o{}rensen}},\
  }\bibfield {title} {\enquote {\bibinfo {title} {Dissipative
        preparation of entanglement in optical cavities},}\
  }\href@noop {} {\bibfield {journal} {\bibinfo {journal}
      {Phys. Rev. Lett.}\ }\textbf {\bibinfo {volume} {106}},\
    \bibinfo {pages} {090502} (\bibinfo {year} {2011})}\BibitemShut
  {NoStop}%
\bibitem [{\citenamefont {Ticozzi}\ and\ \citenamefont
    {Viola}(2012)}]{TicozziViola2012}%
  \BibitemOpen \bibfield {author} {\bibinfo {author} {\bibfnamefont
      {F.}~\bibnamefont {Ticozzi}}\ and\ \bibinfo {author}
    {\bibfnamefont {L.}~\bibnamefont {Viola}},\ }\bibfield {title}
  {\enquote {\bibinfo {title} {Stabilizing entangled states with
        quasi-local quantum dynamical semigroups},}\ }\href {\doibase
    10.1098/rsta.2011.0485} {\bibfield {journal} {\bibinfo {journal}
      {Philosophical Transactions of the Royal Society of London A:
        Mathematical, Physical and Engineering Sciences}\ }\textbf
    {\bibinfo {volume} {370}},\ \bibinfo {pages} {5259--5269}
    (\bibinfo {year} {2012})}\BibitemShut {NoStop}%
\bibitem [{\citenamefont {Ikeda}\ and\ \citenamefont
    {Yamamoto}(2013)}]{PhysRevA.87.033802}%
  \BibitemOpen \bibfield {author} {\bibinfo {author} {\bibfnamefont
      {Y.}~\bibnamefont {Ikeda}}\ and\ \bibinfo {author}
    {\bibfnamefont {N.}~\bibnamefont {Yamamoto}},\ }\bibfield {title}
  {\enquote {\bibinfo {title} {Deterministic generation of gaussian
        pure states in a quasilocal dissipative system},}\ }\href@noop
  {} {\bibfield {journal} {\bibinfo {journal} {Phys. Rev. A}\ }\textbf
    {\bibinfo {volume} {87}},\ \bibinfo {pages} {033802} (\bibinfo
    {year} {2013})}\BibitemShut {NoStop}%
\bibitem [{\citenamefont {Krauter}\ \emph {et~al.}(2011)\citenamefont
    {Krauter}, \citenamefont {Muschik}, \citenamefont {Jensen},
    \citenamefont {Wasilewski}, \citenamefont {Petersen},
    \citenamefont {Cirac},\ and\ \citenamefont
    {Polzik}}]{PhysRevLett.107.080503}%
  \BibitemOpen \bibfield {author} {\bibinfo {author} {\bibfnamefont
      {H.}~\bibnamefont {Krauter}}, \bibinfo {author} {\bibfnamefont
      {C.~A.}\ \bibnamefont {Muschik}}, \bibinfo {author}
    {\bibfnamefont {K.}~\bibnamefont {Jensen}}, \bibinfo {author}
    {\bibfnamefont {W.}~\bibnamefont {Wasilewski}}, \bibinfo {author}
    {\bibfnamefont {J.~M.}\ \bibnamefont {Petersen}}, \bibinfo
    {author} {\bibfnamefont {J.~I.}\ \bibnamefont {Cirac}}, \ and\
    \bibinfo {author} {\bibfnamefont {E.~S.}\ \bibnamefont {Polzik}},\
  }\bibfield {title} {\enquote {\bibinfo {title} {Entanglement
        generated by dissipation and steady state entanglement of two
        macroscopic objects},}\ }\href@noop {} {\bibfield {journal}
    {\bibinfo {journal} {Phys. Rev. Lett.}\ }\textbf {\bibinfo
      {volume} {107}},\ \bibinfo {pages} {080503} (\bibinfo {year}
    {2011})}\BibitemShut {NoStop}%
\bibitem [{\citenamefont {Diehl}\ \emph {et~al.}(2011)\citenamefont
    {Diehl}, \citenamefont {Rico}, \citenamefont {Baranov},\ and\
    \citenamefont {Zoller}}]{Diehl2011}%
  \BibitemOpen \bibfield {author} {\bibinfo {author} {\bibfnamefont
      {S.}~\bibnamefont {Diehl}}, \bibinfo {author} {\bibfnamefont
      {E.}~\bibnamefont {Rico}}, \bibinfo {author} {\bibfnamefont
      {M.~A.}\ \bibnamefont {Baranov}}, \ and\ \bibinfo {author}
    {\bibfnamefont {P.}~\bibnamefont {Zoller}},\ }\bibfield {title}
  {\enquote {\bibinfo {title} {Topology by dissipation in atomic
        quantum wires},}\ }\href@noop {} {\bibfield {journal}
    {\bibinfo {journal} {Nat Phys}\ }\textbf {\bibinfo {volume} {7}},\
    \bibinfo {pages} {971--977} (\bibinfo {year} {2011})}\BibitemShut
  {NoStop}%
\bibitem [{\citenamefont {Cho}\ \emph {et~al.}(2011)\citenamefont
    {Cho}, \citenamefont {Bose},\ and\ \citenamefont
    {Kim}}]{PhysRevLett.106.020504}%
  \BibitemOpen \bibfield {author} {\bibinfo {author} {\bibfnamefont
      {J.}~\bibnamefont {Cho}}, \bibinfo {author} {\bibfnamefont
      {S.}~\bibnamefont {Bose}}, \ and\ \bibinfo {author}
    {\bibfnamefont {M.~S.}\ \bibnamefont {Kim}},\ }\bibfield {title}
  {\enquote {\bibinfo {title} {Optical pumping into many-body
        entanglement},}\ }\href@noop {} {\bibfield {journal} {\bibinfo
      {journal} {Phys. Rev. Lett.}\ }\textbf {\bibinfo {volume}
      {106}},\ \bibinfo {pages} {020504} (\bibinfo {year}
    {2011})}\BibitemShut {NoStop}%
\bibitem [{\citenamefont {Dalla~Torre}\ \emph
    {et~al.}(2013)\citenamefont {Dalla~Torre}, \citenamefont
    {Otterbach}, \citenamefont {Demler}, \citenamefont {Vuletic},\
    and\ \citenamefont {Lukin}}]{PhysRevLett.110.120402}%
  \BibitemOpen \bibfield {author} {\bibinfo {author} {\bibfnamefont
      {E.~G.}\ \bibnamefont {Dalla~Torre}}, \bibinfo {author}
    {\bibfnamefont {J.}~\bibnamefont {Otterbach}}, \bibinfo {author}
    {\bibfnamefont {E.}~\bibnamefont {Demler}}, \bibinfo {author}
    {\bibfnamefont {V.}~\bibnamefont {Vuletic}}, \ and\ \bibinfo
    {author} {\bibfnamefont {M.~D.}\ \bibnamefont {Lukin}},\
  }\bibfield {title} {\enquote {\bibinfo {title} {Dissipative
        preparation of spin squeezed atomic ensembles in a steady
        state},}\ }\href@noop {} {\bibfield {journal} {\bibinfo
      {journal} {Phys. Rev. Lett.}\ }\textbf {\bibinfo {volume}
      {110}},\ \bibinfo {pages} {120402} (\bibinfo {year}
    {2013})}\BibitemShut {NoStop}%
\bibitem [{\citenamefont {Wang}\ and\ \citenamefont
    {Clerk}(2013)}]{PhysRevLett.110.253601}%
  \BibitemOpen \bibfield {author} {\bibinfo {author} {\bibfnamefont
      {Y.-D.}\ \bibnamefont {Wang}}\ and\ \bibinfo {author}
    {\bibfnamefont {A.~A.}\ \bibnamefont {Clerk}},\ }\bibfield {title}
  {\enquote {\bibinfo {title} {Reservoir-engineered entanglement in
        optomechanical systems},}\ }\href@noop {} {\bibfield {journal}
    {\bibinfo {journal} {Phys. Rev. Lett.}\ }\textbf {\bibinfo
      {volume} {110}},\ \bibinfo {pages} {253601} (\bibinfo {year}
    {2013})}\BibitemShut {NoStop}%
\bibitem [{\citenamefont {Reiter}\ \emph {et~al.}(2016)\citenamefont
    {Reiter}, \citenamefont {Reeb},\ and\ \citenamefont
    {S\o{}rensen}}]{PhysRevLett.117.040501}%
  \BibitemOpen \bibfield {author} {\bibinfo {author} {\bibfnamefont
      {F.}~\bibnamefont {Reiter}}, \bibinfo {author} {\bibfnamefont
      {D.}~\bibnamefont {Reeb}}, \ and\ \bibinfo {author}
    {\bibfnamefont {A.~S.}\ \bibnamefont {S\o{}rensen}},\ }\bibfield
  {title} {\enquote {\bibinfo {title} {Scalable dissipative
        preparation of many-body entanglement},}\ }\href@noop {}
  {\bibfield {journal} {\bibinfo {journal} {Phys. Rev. Lett.}\
    }\textbf {\bibinfo {volume} {117}},\ \bibinfo {pages} {040501}
    (\bibinfo {year} {2016})}\BibitemShut {NoStop}%
\bibitem [{\citenamefont {Stannigel}\ \emph
    {et~al.}(2012)\citenamefont {Stannigel}, \citenamefont {Rabl},\
    and\ \citenamefont {Zoller}}]{1367-2630-14-6-063014}%
  \BibitemOpen \bibfield {author} {\bibinfo {author} {\bibfnamefont
      {K.}~\bibnamefont {Stannigel}}, \bibinfo {author} {\bibfnamefont
      {P.}~\bibnamefont {Rabl}}, \ and\ \bibinfo {author}
    {\bibfnamefont {P.}~\bibnamefont {Zoller}},\ }\bibfield {title}
  {\enquote {\bibinfo {title} {Driven-dissipative preparation of
        entangled states in cascaded quantum-optical networks},}\
  }\href@noop {} {\bibfield {journal} {\bibinfo {journal} {New Journal
        of Physics}\ }\textbf {\bibinfo {volume} {14}},\ \bibinfo
    {pages} {063014} (\bibinfo {year} {2012})}\BibitemShut {NoStop}%
\bibitem [{\citenamefont {Reiter}\ \emph {et~al.}(2012)\citenamefont
    {Reiter}, \citenamefont {Kastoryano},\ and\ \citenamefont
    {S.}}]{1367-2630-14-5-053022}%
  \BibitemOpen \bibfield {author} {\bibinfo {author} {\bibfnamefont
      {F.}~\bibnamefont {Reiter}}, \bibinfo {author} {\bibfnamefont
      {M.~J.}\ \bibnamefont {Kastoryano}}, \ and\ \bibinfo {author}
    {\bibfnamefont {S\o{}rensen~A.}\ \bibnamefont {S.}},\ }\bibfield
  {title} {\enquote {\bibinfo {title} {Driving two atoms in an optical
        cavity into an entangled steady state using engineered
        decay},}\ }\href@noop {} {\bibfield {journal} {\bibinfo
      {journal} {New Journal of Physics}\ }\textbf {\bibinfo {volume}
      {14}},\ \bibinfo {pages} {053022} (\bibinfo {year}
    {2012})}\BibitemShut {NoStop}%
\bibitem [{\citenamefont {Schuetz}\ \emph {et~al.}(2013)\citenamefont
    {Schuetz}, \citenamefont {Kessler}, \citenamefont {Vandersypen},
    \citenamefont {Cirac},\ and\ \citenamefont
    {Giedke}}]{PhysRevLett.111.246802}%
  \BibitemOpen \bibfield {author} {\bibinfo {author} {\bibfnamefont
      {M.~J.~A.}\ \bibnamefont {Schuetz}}, \bibinfo {author}
    {\bibfnamefont {E.~M.}\ \bibnamefont {Kessler}}, \bibinfo {author}
    {\bibfnamefont {L.~M.~K.}\ \bibnamefont {Vandersypen}}, \bibinfo
    {author} {\bibfnamefont {J.~I.}\ \bibnamefont {Cirac}}, \ and\
    \bibinfo {author} {\bibfnamefont {G.}~\bibnamefont {Giedke}},\
  }\bibfield {title} {\enquote {\bibinfo {title} {Steady-state
        entanglement in the nuclear spin dynamics of a double quantum
        dot},}\ }\href@noop {} {\bibfield {journal} {\bibinfo
      {journal} {Phys. Rev.  Lett.}\ }\textbf {\bibinfo {volume}
      {111}},\ \bibinfo {pages} {246802} (\bibinfo {year}
    {2013})}\BibitemShut {NoStop}%
\bibitem [{\citenamefont {Leghtas}\ \emph {et~al.}(2013)\citenamefont
    {Leghtas}, \citenamefont {Vool}, \citenamefont {Shankar},
    \citenamefont {Hatridge}, \citenamefont {Girvin}, \citenamefont
    {Devoret},\ and\ \citenamefont {Mirrahimi}}]{PhysRevA.88.023849}%
  \BibitemOpen \bibfield {author} {\bibinfo {author} {\bibfnamefont
      {Z.}~\bibnamefont {Leghtas}}, \bibinfo {author} {\bibfnamefont
      {U.}~\bibnamefont {Vool}}, \bibinfo {author} {\bibfnamefont
      {S.}~\bibnamefont {Shankar}}, \bibinfo {author} {\bibfnamefont
      {M.}~\bibnamefont {Hatridge}}, \bibinfo {author} {\bibfnamefont
      {S.~M.}\ \bibnamefont {Girvin}}, \bibinfo {author}
    {\bibfnamefont {M.~H.}\ \bibnamefont {Devoret}}, \ and\ \bibinfo
    {author} {\bibfnamefont {M.}~\bibnamefont {Mirrahimi}},\
  }\bibfield {title} {\enquote {\bibinfo {title} {Stabilizing a bell
        state of two superconducting qubits by dissipation
        engineering},}\ }\href@noop {} {\bibfield {journal} {\bibinfo
      {journal} {Phys. Rev. A}\ }\textbf {\bibinfo {volume} {88}},\
    \bibinfo {pages} {023849} (\bibinfo {year} {2013})}\BibitemShut
  {NoStop}%
\bibitem [{\citenamefont {Barreiro}\ \emph {et~al.}(2011)\citenamefont
    {Barreiro}, \citenamefont {Muller}, \citenamefont {Schindler},
    \citenamefont {Nigg}, \citenamefont {Monz}, \citenamefont
    {Chwalla}, \citenamefont {Hennrich}, \citenamefont {Roos},
    \citenamefont {Zoller},\ and\ \citenamefont
    {Blatt}}]{Barreiro2011}%
  \BibitemOpen \bibfield {author} {\bibinfo {author} {\bibfnamefont
      {J.~T.}\ \bibnamefont {Barreiro}}, \bibinfo {author}
    {\bibfnamefont {M.}~\bibnamefont {Muller}}, \bibinfo {author}
    {\bibfnamefont {P.}~\bibnamefont {Schindler}}, \bibinfo {author}
    {\bibfnamefont {D.}~\bibnamefont {Nigg}}, \bibinfo {author}
    {\bibfnamefont {T.}~\bibnamefont {Monz}}, \bibinfo {author}
    {\bibfnamefont {M.}~\bibnamefont {Chwalla}}, \bibinfo {author}
    {\bibfnamefont {M.}~\bibnamefont {Hennrich}}, \bibinfo {author}
    {\bibfnamefont {C.~F.}\ \bibnamefont {Roos}}, \bibinfo {author}
    {\bibfnamefont {P.}~\bibnamefont {Zoller}}, \ and\ \bibinfo
    {author} {\bibfnamefont {R.}~\bibnamefont {Blatt}},\ }\bibfield
  {title} {\enquote {\bibinfo {title} {An open-system quantum
        simulator with trapped ions},}\ }\href@noop {} {\bibfield
    {journal} {\bibinfo {journal} {Nature}\ }\textbf {\bibinfo
      {volume} {470}},\ \bibinfo {pages} {486--491} (\bibinfo {year}
    {2011})}\BibitemShut {NoStop}%
\bibitem [{\citenamefont {Dirr}\ \emph {et~al.}(2009)\citenamefont
    {Dirr}, \citenamefont {Helmke}, \citenamefont {Kurniawan},\ and\
    \citenamefont {Schulte-Herbr{\"u}ggen}}]{Dirr2009}%
  \BibitemOpen \bibfield {author} {\bibinfo {author} {\bibfnamefont
      {G.}~\bibnamefont {Dirr}}, \bibinfo {author} {\bibfnamefont
      {U.}~\bibnamefont {Helmke}}, \bibinfo {author} {\bibfnamefont
      {I.}~\bibnamefont {Kurniawan}}, \ and\ \bibinfo {author}
    {\bibfnamefont {T.}~\bibnamefont {Schulte-Herbr{\"u}ggen}},\
  }\bibfield {title} {\enquote {\bibinfo {title} {Lie-semigroup
        structures for reachability and control of open quantum
        systems: Kossakowski-{L}indblad generators form {L}ie wedge to
        {M}arkovian channels},}\ }\href {\doibase
    http://dx.doi.org/10.1016/S0034-4877(09)90022-2} {\bibfield
    {journal} {\bibinfo {journal} {Reports on Mathematical Physics}\
    }\textbf {\bibinfo {volume} {64}},\ \bibinfo {pages} {93 -- 121}
    (\bibinfo {year} {2009})}\BibitemShut {NoStop}%
\bibitem [{\citenamefont {Glaser}\ \emph {et~al.}(2015)\citenamefont
    {Glaser}, \citenamefont {Boscain}, \citenamefont {Calarco},
    \citenamefont {Koch}, \citenamefont {K{\"o}ckenberger},
    \citenamefont {Kosloff}, \citenamefont {Kuprov}, \citenamefont
    {Luy}, \citenamefont {Schirmer}, \citenamefont
    {Schulte-Herbr{\"u}ggen}, \citenamefont {Sugny},\ and\
    \citenamefont {Wilhelm}}]{Glaser2015}%
  \BibitemOpen \bibfield {author} {\bibinfo {author} {\bibfnamefont
      {S.~J.}\ \bibnamefont {Glaser}}, \bibinfo {author}
    {\bibfnamefont {U.}~\bibnamefont {Boscain}}, \bibinfo {author}
    {\bibfnamefont {T.}~\bibnamefont {Calarco}}, \bibinfo {author}
    {\bibfnamefont {C.~P.}\ \bibnamefont {Koch}}, \bibinfo {author}
    {\bibfnamefont {W.}~\bibnamefont {K{\"o}ckenberger}}, \bibinfo
    {author} {\bibfnamefont {R.}~\bibnamefont {Kosloff}}, \bibinfo
    {author} {\bibfnamefont {I.}~\bibnamefont {Kuprov}}, \bibinfo
    {author} {\bibfnamefont {B.}~\bibnamefont {Luy}}, \bibinfo
    {author} {\bibfnamefont {S.}~\bibnamefont {Schirmer}}, \bibinfo
    {author} {\bibfnamefont {T.}~\bibnamefont
      {Schulte-Herbr{\"u}ggen}}, \bibinfo {author} {\bibfnamefont
      {D.}~\bibnamefont {Sugny}}, \ and\ \bibinfo {author}
    {\bibfnamefont {F.~K.}\ \bibnamefont {Wilhelm}},\ }\bibfield
  {title} {\enquote {\bibinfo {title} {Training {S}chr{\"o}dinger's
        cat: quantum optimal control},}\ }\href {\doibase
    10.1140/epjd/e2015-60464-1} {\bibfield {journal} {\bibinfo
      {journal} {The European Physical Journal D}\ }\textbf {\bibinfo
      {volume} {69}},\ \bibinfo {pages} {279} (\bibinfo {year}
    {2015})}\BibitemShut {NoStop}%
\bibitem [{\citenamefont {Lindblad}(1976)}]{Lindblad}%
  \BibitemOpen \bibfield {author} {\bibinfo {author} {\bibfnamefont
      {G.}~\bibnamefont {Lindblad}},\ }\bibfield {title} {\enquote
    {\bibinfo {title} {On the generators of quantum dynamical
        semigroups},}\ }\href {\doibase 10.1007/BF01608499} {\bibfield
    {journal} {\bibinfo {journal} {Communications in Mathematical
        Physics}\ }\textbf {\bibinfo {volume} {48}},\ \bibinfo {pages}
    {119--130} (\bibinfo {year} {1976})}\BibitemShut {NoStop}%
\bibitem [{\citenamefont {Gorini}\ \emph {et~al.}(1976)\citenamefont
    {Gorini}, \citenamefont {Kossakowski},\ and\ \citenamefont
    {Sudarshan}}]{GKS}%
  \BibitemOpen \bibfield {author} {\bibinfo {author} {\bibfnamefont
      {V.}~\bibnamefont {Gorini}}, \bibinfo {author} {\bibfnamefont
      {A.}~\bibnamefont {Kossakowski}}, \ and\ \bibinfo {author}
    {\bibfnamefont {E.~C.~G.}\ \bibnamefont {Sudarshan}},\ }\bibfield
  {title} {\enquote {\bibinfo {title} {Completely positive dynamical
        semigroups of {$N$}-level systems},}\ }\href {\doibase
    http://dx.doi.org/10.1063/1.522979} {\bibfield {journal} {\bibinfo
      {journal} {Journal of Mathematical Physics}\ }\textbf {\bibinfo
      {volume} {17}},\ \bibinfo {pages} {821--825} (\bibinfo {year}
    {1976})}\BibitemShut {NoStop}%
\bibitem [{\citenamefont {Yamamoto}(2005)}]{Yamamoto05}%
  \BibitemOpen \bibfield {author} {\bibinfo {author} {\bibfnamefont
      {N.}~\bibnamefont {Yamamoto}},\ }\bibfield {title} {\enquote
    {\bibinfo {title} {Parametrization of the feedback hamiltonian
        realizing a pure steady state},}\ }\href {\doibase
    10.1103/PhysRevA.72.024104} {\bibfield {journal} {\bibinfo
      {journal} {Phys. Rev. A}\ }\textbf {\bibinfo {volume} {72}},\
    \bibinfo {pages} {024104} (\bibinfo {year} {2005})}\BibitemShut
  {NoStop}%
\bibitem [{\citenamefont {Kraus}\ \emph {et~al.}(2008)\citenamefont
    {Kraus}, \citenamefont {B\"uchler}, \citenamefont {Diehl},
    \citenamefont {Kantian}, \citenamefont {Micheli},\ and\
    \citenamefont {Zoller}}]{ZollerPRA08}%
  \BibitemOpen \bibfield {author} {\bibinfo {author} {\bibfnamefont
      {B.}~\bibnamefont {Kraus}}, \bibinfo {author} {\bibfnamefont
      {H.~P.}\ \bibnamefont {B\"uchler}}, \bibinfo {author}
    {\bibfnamefont {S.}~\bibnamefont {Diehl}}, \bibinfo {author}
    {\bibfnamefont {A.}~\bibnamefont {Kantian}}, \bibinfo {author}
    {\bibfnamefont {A.}~\bibnamefont {Micheli}}, \ and\ \bibinfo
    {author} {\bibfnamefont {P.}~\bibnamefont {Zoller}},\ }\bibfield
  {title} {\enquote {\bibinfo {title} {Preparation of entangled states
        by quantum markov processes},}\ }\href {\doibase
    10.1103/PhysRevA.78.042307} {\bibfield {journal} {\bibinfo
      {journal} {Phys. Rev. A}\ }\textbf {\bibinfo {volume} {78}},\
    \bibinfo {pages} {042307} (\bibinfo {year} {2008})}\BibitemShut
  {NoStop}%
\bibitem [{\citenamefont {Fischer}\ \emph {et~al.}(2001)\citenamefont
    {Fischer}, \citenamefont {Guti{\'e}rrez-Medina},\ and\
    \citenamefont {Raizen}}]{ZenoStaticsExperimental}%
  \BibitemOpen \bibfield {author} {\bibinfo {author} {\bibfnamefont
      {M.~C.}\ \bibnamefont {Fischer}}, \bibinfo {author}
    {\bibfnamefont {B.}~\bibnamefont {Guti{\'e}rrez-Medina}}, \ and\
    \bibinfo {author} {\bibfnamefont {M.~G.}\ \bibnamefont {Raizen}},\
  }\bibfield {title} {\enquote {\bibinfo {title} {Observation of the
        quantum zeno and anti-zeno effects in an unstable system},}\
  }\href {\doibase 10.1103/PhysRevLett.87.040402} {\bibfield {journal}
    {\bibinfo {journal} {Phys. Rev. Lett.}\ }\textbf {\bibinfo
      {volume} {87}},\ \bibinfo {pages} {040402} (\bibinfo {year}
    {2001})}\BibitemShut {NoStop}%
\bibitem [{\citenamefont {Syassen}\ \emph {et~al.}(2008)\citenamefont
    {Syassen}, \citenamefont {Bauer}, \citenamefont {Lettner},
    \citenamefont {Volz}, \citenamefont {Dietze}, \citenamefont
    {Garc{\'\i}a-Ripoll}, \citenamefont {Cirac}, \citenamefont
    {Rempe},\ and\ \citenamefont {D{\"u}rr}}]{TonksgasScience}%
  \BibitemOpen \bibfield {author} {\bibinfo {author} {\bibfnamefont
      {N.}~\bibnamefont {Syassen}}, \bibinfo {author} {\bibfnamefont
      {D.~M.}\ \bibnamefont {Bauer}}, \bibinfo {author} {\bibfnamefont
      {M.}~\bibnamefont {Lettner}}, \bibinfo {author} {\bibfnamefont
      {T.}~\bibnamefont {Volz}}, \bibinfo {author} {\bibfnamefont
      {D.}~\bibnamefont {Dietze}}, \bibinfo {author} {\bibfnamefont
      {J.~J.}\ \bibnamefont {Garc{\'\i}a-Ripoll}}, \bibinfo {author}
    {\bibfnamefont {J.~I.}\ \bibnamefont {Cirac}}, \bibinfo {author}
    {\bibfnamefont {G.}~\bibnamefont {Rempe}}, \ and\ \bibinfo
    {author} {\bibfnamefont {S.}~\bibnamefont {D{\"u}rr}},\ }\bibfield
  {title} {\enquote {\bibinfo {title} {Strong dissipation inhibits
        losses and induces correlations in cold molecular gases},}\
  }\href {\doibase 10.1126/science.1155309} {\bibfield {journal}
    {\bibinfo {journal} {Science}\ }\textbf {\bibinfo {volume}
      {320}},\ \bibinfo {pages} {1329--1331} (\bibinfo {year}
    {2008})}\BibitemShut {NoStop}%
\bibitem [{\citenamefont {Sch{\"a}fer}\ \emph
    {et~al.}(2014)\citenamefont {Sch{\"a}fer}, \citenamefont
    {Herrera}, \citenamefont {Cherukattil}, \citenamefont {Lovecchio},
    \citenamefont {Cataliotti}, \citenamefont {Caruso},\ and\
    \citenamefont {Smerzi}}]{ZenoDynamicsExperimental}%
  \BibitemOpen \bibfield {author} {\bibinfo {author} {\bibfnamefont
      {F.}~\bibnamefont {Sch{\"a}fer}}, \bibinfo {author}
    {\bibfnamefont {I.}~\bibnamefont {Herrera}}, \bibinfo {author}
    {\bibfnamefont {S.}~\bibnamefont {Cherukattil}}, \bibinfo {author}
    {\bibfnamefont {C.}~\bibnamefont {Lovecchio}}, \bibinfo {author}
    {\bibfnamefont {F.~S.}\ \bibnamefont {Cataliotti}}, \bibinfo
    {author} {\bibfnamefont {F.}~\bibnamefont {Caruso}}, \ and\
    \bibinfo {author} {\bibfnamefont {A.}~\bibnamefont {Smerzi}},\
  }\bibfield {title} {\enquote {\bibinfo {title} {Experimental
        realization of quantum zeno dynamics},}\ }\href@noop {}
  {\bibfield {journal} {\bibinfo {journal} {Nat Commun}\ }\textbf
    {\bibinfo {volume} {5}} (\bibinfo {year} {2014})},\ \bibinfo
  {note} {article}\BibitemShut {NoStop}%
\bibitem [{\citenamefont {Signoles}\ \emph {et~al.}(2014)\citenamefont
    {Signoles}, \citenamefont {Facon}, \citenamefont {Grosso},
    \citenamefont {Dotsenko}, \citenamefont {Haroche}, \citenamefont
    {Raimond}, \citenamefont {Brune},\ and\ \citenamefont
    {Gleyzes}}]{Signoles2014}%
  \BibitemOpen \bibfield {author} {\bibinfo {author} {\bibfnamefont
      {A.}~\bibnamefont {Signoles}}, \bibinfo {author} {\bibfnamefont
      {A.}~\bibnamefont {Facon}}, \bibinfo {author} {\bibfnamefont
      {D.}~\bibnamefont {Grosso}}, \bibinfo {author} {\bibfnamefont
      {I.}~\bibnamefont {Dotsenko}}, \bibinfo {author} {\bibfnamefont
      {S.}~\bibnamefont {Haroche}}, \bibinfo {author} {\bibfnamefont
      {J.-M.}\ \bibnamefont {Raimond}}, \bibinfo {author}
    {\bibfnamefont {M.}~\bibnamefont {Brune}}, \ and\ \bibinfo
    {author} {\bibfnamefont {S.}~\bibnamefont {Gleyzes}},\ }\bibfield
  {title} {\enquote {\bibinfo {title} {Confined quantum zeno dynamics
        of a watched atomic arrow},}\ }\href@noop {} {\bibfield
    {journal} {\bibinfo {journal} {Nat Phys}\ }\textbf {\bibinfo
      {volume} {10}},\ \bibinfo {pages} {715--719} (\bibinfo {year}
    {2014})},\ \bibinfo {note} {letter}\BibitemShut {NoStop}%
\bibitem [{\citenamefont {Facchi}\ and\ \citenamefont
    {Pascazio}(2008)}]{Zeno}%
  \BibitemOpen \bibfield {author} {\bibinfo {author} {\bibfnamefont
      {P.}~\bibnamefont {Facchi}}\ and\ \bibinfo {author}
    {\bibfnamefont {S.}~\bibnamefont {Pascazio}},\ }\bibfield {title}
  {\enquote {\bibinfo {title} {Quantum zeno dynamics: mathematical and
        physical aspects},}\ }\href@noop {} {\bibfield {journal}
    {\bibinfo {journal} {Journal of Physics A: Mathematical and
        Theoretical}\ }\textbf {\bibinfo {volume} {41}},\ \bibinfo
    {pages} {493001} (\bibinfo {year} {2008})}\BibitemShut {NoStop}%
\bibitem [{\citenamefont {Garc{\'\i}a-Ripoll}\ \emph
    {et~al.}(2009)\citenamefont {Garc{\'\i}a-Ripoll}, \citenamefont
    {D{\"u}rr}, \citenamefont {Syassen}, \citenamefont {Bauer},
    \citenamefont {Lettner}, \citenamefont {Rempe},\ and\
    \citenamefont {Cirac}}]{TonksgasNJP}%
  \BibitemOpen \bibfield {author} {\bibinfo {author} {\bibfnamefont
      {J.~J.}\ \bibnamefont {Garc{\'\i}a-Ripoll}}, \bibinfo {author}
    {\bibfnamefont {S.}~\bibnamefont {D{\"u}rr}}, \bibinfo {author}
    {\bibfnamefont {N.}~\bibnamefont {Syassen}}, \bibinfo {author}
    {\bibfnamefont {D.~M.}\ \bibnamefont {Bauer}}, \bibinfo {author}
    {\bibfnamefont {M.}~\bibnamefont {Lettner}}, \bibinfo {author}
    {\bibfnamefont {G.}~\bibnamefont {Rempe}}, \ and\ \bibinfo
    {author} {\bibfnamefont {J.~I.}\ \bibnamefont {Cirac}},\
  }\bibfield {title} {\enquote {\bibinfo {title} {Dissipation-induced
        hard-core boson gas in an optical lattice},}\ }\href@noop {}
  {\bibfield {journal} {\bibinfo {journal} {New Journal of Physics}\
    }\textbf {\bibinfo {volume} {11}},\ \bibinfo {pages} {013053}
    (\bibinfo {year} {2009})}\BibitemShut {NoStop}%
\bibitem [{\citenamefont {D\"urr}\ \emph {et~al.}(2009)\citenamefont
    {D\"urr}, \citenamefont {Garc\'{\i}a-Ripoll}, \citenamefont
    {Syassen}, \citenamefont {Bauer}, \citenamefont {Lettner},
    \citenamefont {Cirac},\ and\ \citenamefont
    {Rempe}}]{TonksgasPRA2009}%
  \BibitemOpen \bibfield {author} {\bibinfo {author} {\bibfnamefont
      {S.}~\bibnamefont {D\"urr}}, \bibinfo {author} {\bibfnamefont
      {J.~J.}\ \bibnamefont {Garc\'{\i}a-Ripoll}}, \bibinfo {author}
    {\bibfnamefont {N.}~\bibnamefont {Syassen}}, \bibinfo {author}
    {\bibfnamefont {D.~M.}\ \bibnamefont {Bauer}}, \bibinfo {author}
    {\bibfnamefont {M.}~\bibnamefont {Lettner}}, \bibinfo {author}
    {\bibfnamefont {J.~I.}\ \bibnamefont {Cirac}}, \ and\ \bibinfo
    {author} {\bibfnamefont {G.}~\bibnamefont {Rempe}},\ }\bibfield
  {title} {\enquote {\bibinfo {title} {Lieb-liniger model of a
        dissipation-induced tonks-girardeau gas},}\ }\href {\doibase
    10.1103/PhysRevA.79.023614} {\bibfield {journal} {\bibinfo
      {journal} {Phys. Rev. A}\ }\textbf {\bibinfo {volume} {79}},\
    \bibinfo {pages} {023614} (\bibinfo {year} {2009})}\BibitemShut
  {NoStop}%
\bibitem [{\citenamefont {Zanardi}\ and\ \citenamefont
    {Campos~Venuti}(2014)}]{Zanardi2014PRL}%
  \BibitemOpen \bibfield {author} {\bibinfo {author} {\bibfnamefont
      {P.}~\bibnamefont {Zanardi}}\ and\ \bibinfo {author}
    {\bibfnamefont {L.}~\bibnamefont {Campos~Venuti}},\ }\bibfield
  {title} {\enquote {\bibinfo {title} {Coherent quantum dynamics in
        steady-state manifolds of strongly dissipative systems},}\
  }\href {\doibase 10.1103/PhysRevLett.113.240406} {\bibfield
    {journal} {\bibinfo {journal} {Phys. Rev. Lett.}\ }\textbf
    {\bibinfo {volume} {113}},\ \bibinfo {pages} {240406} (\bibinfo
    {year} {2014})}\BibitemShut {NoStop}%
\bibitem [{\citenamefont {Zanardi}\ and\ \citenamefont
    {Campos~Venuti}(2015)}]{Zanardi2015PRA}%
  \BibitemOpen \bibfield {author} {\bibinfo {author} {\bibfnamefont
      {P.}~\bibnamefont {Zanardi}}\ and\ \bibinfo {author}
    {\bibfnamefont {L.}~\bibnamefont {Campos~Venuti}},\ }\bibfield
  {title} {\enquote {\bibinfo {title} {Geometry, robustness, and
        emerging unitarity in dissipation-projected dynamics},}\
  }\href {\doibase 10.1103/PhysRevA.91.052324} {\bibfield {journal}
    {\bibinfo {journal} {Phys. Rev. A}\ }\textbf {\bibinfo {volume}
      {91}},\ \bibinfo {pages} {052324} (\bibinfo {year}
    {2015})}\BibitemShut {NoStop}%
\bibitem [{\citenamefont {Toskovic}\ \emph {et~al.}(2016)\citenamefont
    {Toskovic}, \citenamefont {van~den Berg}, \citenamefont
    {Spinelli}, \citenamefont {Eliens}, \citenamefont {van~den Toorn},
    \citenamefont {Bryant}, \citenamefont {Caux},\ and\ \citenamefont
    {Otte}}]{2016OtteXXZ}%
  \BibitemOpen \bibfield {author} {\bibinfo {author} {\bibfnamefont
      {R.}~\bibnamefont {Toskovic}}, \bibinfo {author} {\bibfnamefont
      {R.}~\bibnamefont {van~den Berg}}, \bibinfo {author}
    {\bibfnamefont {A.}~\bibnamefont {Spinelli}}, \bibinfo {author}
    {\bibfnamefont {I.~S.}\ \bibnamefont {Eliens}}, \bibinfo {author}
    {\bibfnamefont {B.}~\bibnamefont {van~den Toorn}}, \bibinfo
    {author} {\bibfnamefont {B.}~\bibnamefont {Bryant}}, \bibinfo
    {author} {\bibfnamefont {J.-S.}\ \bibnamefont {Caux}}, \ and\
    \bibinfo {author} {\bibfnamefont {A.~F.}\ \bibnamefont {Otte}},\
  }\bibfield {title} {\enquote {\bibinfo {title} {Atomic spin-chain
        realization of a model for quantum criticality},}\ }\href@noop
  {} {\bibfield {journal} {\bibinfo {journal} {Nat Phys}\ }\textbf
    {\bibinfo {volume} {12}},\ \bibinfo {pages} {656--660} (\bibinfo
    {year} {2016})},\ \bibinfo {note} {letter}\BibitemShut {NoStop}%
\bibitem [{\citenamefont {Breuer}\ and\ \citenamefont
    {Petruccione}(2002)}]{Petruccione}%
  \BibitemOpen \bibfield {author} {\bibinfo {author} {\bibfnamefont
      {H.-P.}\ \bibnamefont {Breuer}}\ and\ \bibinfo {author}
    {\bibfnamefont {F.}~\bibnamefont {Petruccione}},\ }\href@noop {}
  {\emph {\bibinfo {title} {The Theory of Open Quantum Systems}}}\
  (\bibinfo {publisher} {Oxford University Press},\ \bibinfo {year}
  {2002})\BibitemShut {NoStop}%
\bibitem [{\citenamefont {Plenio}\ and\ \citenamefont
    {Knight}(1998)}]{PlenioJumps}%
  \BibitemOpen \bibfield {author} {\bibinfo {author} {\bibfnamefont
      {M.~B.}\ \bibnamefont {Plenio}}\ and\ \bibinfo {author}
    {\bibfnamefont {P.~L.}\ \bibnamefont {Knight}},\ }\bibfield
  {title} {\enquote {\bibinfo {title} {The quantum-jump approach to
        dissipative dynamics in quantum optics},}\ }\href {\doibase
    10.1103/RevModPhys.70.101} {\bibfield {journal} {\bibinfo
      {journal} {Rev.  Mod. Phys.}\ }\textbf {\bibinfo {volume}
      {70}},\ \bibinfo {pages} {101--144} (\bibinfo {year}
    {1998})}\BibitemShut {NoStop}%
\bibitem [{\citenamefont {Clark}\ \emph {et~al.}(2010)\citenamefont
    {Clark}, \citenamefont {Prior}, \citenamefont {Hartmann},
    \citenamefont {Jaksch},\ and\ \citenamefont
    {Plenio}}]{ClarkPriorMPA2010}%
  \BibitemOpen \bibfield {author} {\bibinfo {author} {\bibfnamefont
      {S.~R.}\ \bibnamefont {Clark}}, \bibinfo {author} {\bibfnamefont
      {J.}~\bibnamefont {Prior}}, \bibinfo {author} {\bibfnamefont
      {M.~J.}\ \bibnamefont {Hartmann}}, \bibinfo {author}
    {\bibfnamefont {D.}~\bibnamefont {Jaksch}}, \ and\ \bibinfo
    {author} {\bibfnamefont {M.~B.}\ \bibnamefont {Plenio}},\
  }\bibfield {title} {\enquote {\bibinfo {title} {Exact matrix product
        solutions in the heisenberg picture of an open quantum spin
        chain},}\ }\href@noop {} {\bibfield {journal} {\bibinfo
      {journal} {New Journal of Physics}\ }\textbf {\bibinfo {volume}
      {12}},\ \bibinfo {pages} {025005} (\bibinfo {year}
    {2010})}\BibitemShut {NoStop}%
\bibitem [{\citenamefont {Popkov}\ and\ \citenamefont
    {Salerno}(2013)}]{XYweak}%
  \BibitemOpen \bibfield {author} {\bibinfo {author} {\bibfnamefont
      {V.}~\bibnamefont {Popkov}}\ and\ \bibinfo {author}
    {\bibfnamefont {M.}~\bibnamefont {Salerno}},\ }\bibfield {title}
  {\enquote {\bibinfo {title} {Anomalous currents in a driven
        \emph{XXZ} chain with boundary twisting at weak coupling or
        weak driving},}\ }\href
  {http://stacks.iop.org/1742-5468/2013/i=02/a=P02040} {\bibfield
    {journal} {\bibinfo {journal} {Journal of Statistical Mechanics:
        Theory and Experiment}\ }\textbf {\bibinfo {volume} {2013}},\
    \bibinfo {pages} {P02040} (\bibinfo {year} {2013})}\BibitemShut
  {NoStop}%
\bibitem [{\citenamefont {Popkov}(2012)}]{XYtwist}%
  \BibitemOpen \bibfield {author} {\bibinfo {author} {\bibfnamefont
      {V.}~\bibnamefont {Popkov}},\ }\bibfield {title} {\enquote
    {\bibinfo {title} {Alternation of sign of magnetization current in
        driven \emph{XXZ} chains with twisted xy boundary
        gradients},}\ }\href@noop {} {\bibfield {journal} {\bibinfo
      {journal} {Journal of Statistical Mechanics: Theory and
        Experiment}\ }\textbf {\bibinfo {volume} {2012}},\ \bibinfo
    {pages} {P12015} (\bibinfo {year} {2012})}\BibitemShut {NoStop}%
\bibitem [{\citenamefont {Karevski}\ \emph {et~al.}(2013)\citenamefont
    {Karevski}, \citenamefont {Popkov},\ and\ \citenamefont
    {Sch\"utz}}]{MPA}%
  \BibitemOpen \bibfield {author} {\bibinfo {author} {\bibfnamefont
      {D.}~\bibnamefont {Karevski}}, \bibinfo {author} {\bibfnamefont
      {V.}~\bibnamefont {Popkov}}, \ and\ \bibinfo {author}
    {\bibfnamefont {G.~M.}\ \bibnamefont {Sch\"utz}},\ }\bibfield
  {title} {\enquote {\bibinfo {title} {Exact matrix product solution
        for the boundary-driven lindblad \emph{XXZ} chain},}\ }\href
  {\doibase 10.1103/PhysRevLett.110.047201} {\bibfield {journal}
    {\bibinfo {journal} {Phys. Rev. Lett.}\ }\textbf {\bibinfo
      {volume} {110}},\ \bibinfo {pages} {047201} (\bibinfo {year}
    {2013})}\BibitemShut {NoStop}%
\bibitem [{\citenamefont {Popkov}\ \emph {et~al.}(2013)\citenamefont
    {Popkov}, \citenamefont {Karevski},\ and\ \citenamefont
    {Sch\"utz}}]{MPA-PRE2013}%
  \BibitemOpen \bibfield {author} {\bibinfo {author} {\bibfnamefont
      {V.}~\bibnamefont {Popkov}}, \bibinfo {author} {\bibfnamefont
      {D.}~\bibnamefont {Karevski}}, \ and\ \bibinfo {author}
    {\bibfnamefont {G.~M.}\ \bibnamefont {Sch\"utz}},\ }\bibfield
  {title} {\enquote {\bibinfo {title} {Driven isotropic heisenberg
        spin chain with arbitrary boundary twisting angle: Exact
        results},}\ }\href {\doibase 10.1103/PhysRevE.88.062118}
  {\bibfield {journal} {\bibinfo {journal} {Phys. Rev. E}\ }\textbf
    {\bibinfo {volume} {88}},\ \bibinfo {pages} {062118} (\bibinfo
    {year} {2013})}\BibitemShut {NoStop}%
\bibitem [{\citenamefont {Prosen}(2011)}]{ProsenExact2011}%
  \BibitemOpen \bibfield {author} {\bibinfo {author} {\bibfnamefont
      {T.}~\bibnamefont {Prosen}},\ }\bibfield {title} {\enquote
    {\bibinfo {title} {Exact nonequilibrium steady state of a strongly
        driven open \emph{XXZ} chain},}\ }\href {\doibase
    10.1103/PhysRevLett.107.137201} {\bibfield {journal} {\bibinfo
      {journal} {Phys. Rev. Lett.}\ }\textbf {\bibinfo {volume}
      {107}},\ \bibinfo {pages} {137201} (\bibinfo {year}
    {2011})}\BibitemShut {NoStop}%
\bibitem [{\citenamefont {Landi}\ \emph {et~al.}(2014)\citenamefont
    {Landi}, \citenamefont {Novais}, \citenamefont {de~Oliveira},\
    and\ \citenamefont {Karevski}}]{Landi}%
  \BibitemOpen \bibfield {author} {\bibinfo {author} {\bibfnamefont
      {G.~T.}\ \bibnamefont {Landi}}, \bibinfo {author} {\bibfnamefont
      {E.}~\bibnamefont {Novais}}, \bibinfo {author} {\bibfnamefont
      {M.~J.}\ \bibnamefont {de~Oliveira}}, \ and\ \bibinfo {author}
    {\bibfnamefont {D.}~\bibnamefont {Karevski}},\ }\bibfield {title}
  {\enquote {\bibinfo {title} {Flux rectification in the quantum
        {$XXZ$} chain},}\ }\href {\doibase 10.1103/PhysRevE.90.042142}
  {\bibfield {journal} {\bibinfo {journal} {Phys. Rev. E}\ }\textbf
    {\bibinfo {volume} {90}},\ \bibinfo {pages} {042142} (\bibinfo
    {year} {2014})}\BibitemShut {NoStop}%
\bibitem [{\citenamefont {Popkov}\ and\ \citenamefont
    {Presilla}(2016)}]{PhysRevA.93.022111}%
  \BibitemOpen \bibfield {author} {\bibinfo {author} {\bibfnamefont
      {V.}~\bibnamefont {Popkov}}\ and\ \bibinfo {author}
    {\bibfnamefont {C.}~\bibnamefont {Presilla}},\ }\bibfield {title}
  {\enquote {\bibinfo {title} {Obtaining pure steady states in
        nonequilibrium quantum systems with strong dissipative
        couplings},}\ }\href {\doibase 10.1103/PhysRevA.93.022111}
  {\bibfield {journal} {\bibinfo {journal} {Phys. Rev. A}\ }\textbf
    {\bibinfo {volume} {93}},\ \bibinfo {pages} {022111} (\bibinfo
    {year} {2016})}\BibitemShut {NoStop}%
\bibitem [{\citenamefont {Popkov}\ \emph {et~al.}(2017)\citenamefont
    {Popkov}, \citenamefont {Schmidt},\ and\ \citenamefont
    {Presilla}}]{PSP2017}%
  \BibitemOpen \bibfield {author} {\bibinfo {author} {\bibfnamefont
      {V.}~\bibnamefont {Popkov}}, \bibinfo {author} {\bibfnamefont
      {J.}~\bibnamefont {Schmidt}}, \ and\ \bibinfo {author}
    {\bibfnamefont {C.}~\bibnamefont {Presilla}},\ }\href@noop {}
  {\enquote {\bibinfo {title} {Spin-helix states in the {$XXZ$} spin
        chain with strong dissipation},}\ } (\bibinfo {year} {2017}),\
  \Eprint {http://arxiv.org/abs/arXiv:1703.08233} {arXiv:1703.08233}
  \BibitemShut {NoStop}%
\bibitem [{\citenamefont {\ifmmode \check{Z}\else
      \v{Z}\fi{}nidari\ifmmode~\check{c}\else
      \v{c}\fi{}}(2015)}]{Znidaric2015}%
  \BibitemOpen \bibfield {author} {\bibinfo {author} {\bibfnamefont
      {M.}~\bibnamefont {\ifmmode \check{Z}\else
        \v{Z}\fi{}nidari\ifmmode~\check{c}\else \v{c}\fi{}}},\
  }\bibfield {title} {\enquote {\bibinfo {title} {Relaxation times of
        dissipative many-body quantum systems},}\ }\href {\doibase
    10.1103/PhysRevE.92.042143} {\bibfield {journal} {\bibinfo
      {journal} {Phys.  Rev. E}\ }\textbf {\bibinfo {volume} {92}},\
    \bibinfo {pages} {042143} (\bibinfo {year} {2015})}\BibitemShut
  {NoStop}%
\bibitem [{\citenamefont {\ifmmode \check{Z}\else
      \v{Z}\fi{}nidari\ifmmode~\check{c}\else
      \v{c}\fi{}}(2016)}]{2016ZnidarichTargeting}%
  \BibitemOpen \bibfield {author} {\bibinfo {author} {\bibfnamefont
      {M.}~\bibnamefont {\ifmmode \check{Z}\else
        \v{Z}\fi{}nidari\ifmmode~\check{c}\else \v{c}\fi{}}},\
  }\bibfield {title} {\enquote {\bibinfo {title} {Dissipative
        remote-state preparation in an interacting medium},}\ }\href
  {\doibase 10.1103/PhysRevLett.116.030403} {\bibfield {journal}
    {\bibinfo {journal} {Phys. Rev. Lett.}\ }\textbf {\bibinfo
      {volume} {116}},\ \bibinfo {pages} {030403} (\bibinfo {year}
    {2016})}\BibitemShut {NoStop}%
\bibitem [{Note1()}]{Note1}%
  \BibitemOpen \bibinfo {note} {A. F. Otte, private
    communication}\BibitemShut {NoStop}%
\end{thebibliography}

%

\end{document}